\newcommand{\binom}[2]{\Bigl(\begin{array}{@{}c@{}}#1\\#2\end{array}\Bigr)}
\DeclareMathAlphabet{\mathrsfs}{U}{rsfs}{m}{n}
\DeclareMathAlphabet{\mathpzc}{OT1}{pzc}{m}{it}
\DeclareMathAlphabet{\matheus}{U}{eus}{m}{n}
\DeclareMathAlphabet{\mathbbold}{U}{bbold}{m}{n}
\definecolor{myurlcolor}{rgb}{0,0,0.7}
\definecolor{myrefcolor}{rgb}{0,0.7,0}
\newcommand{\ba}{\begin{eqnarray}}
\newcommand{\be}{\begin{equation}}
\newcommand{\ee}{\end{equation}}
\newcommand{\beq}{\begin{equation}}
\newcommand{\eeq}{  \end{equation}}
\newcommand{\bea}{\begin{eqnarray}}
\newcommand{\eea}{  \end{eqnarray}}
\newcommand{\ea}{\end{eqnarray}}
\newcommand{\ban}{\begin{eqnarray*}}
\newcommand{\ean}{\end{eqnarray*}}
\newcommand{\ket}[1]{|#1\rangle}
\newcommand{\bra}[1]{\langle#1|}
\newcommand{\bracket}[3]{\langle#1|#2|#3\rangle}
\newcommand{\ps}{\ket{\psi}}
\newcommand{\ct}{c_{\theta}}
\newcommand{\st}{s_{\theta}}
\newcommand{\Ct}{c^2_{\theta}}
\newcommand{\idd}{\mathbb{I}}
\newcommand{\Ah}{\hat{A}}
\newcommand{\Bh}{\hat{B}}
\newcommand{\Ch}{\hat{C}}
\newtheorem{theorem}{Theorem}
\newtheorem{cor}{Corollary}
\newtheorem{lem}{Lemma}
\newtheorem{defn}{Definition}
\newcommand{\ivan}{\color{black}}
\newcommand{\ot}[0]{\otimes}
\begin{document}

\title{Self-testing multipartite entangled states through projections onto two systems}

\author{I. \v{S}upi\'{c}$^1$, A. Coladangelo$^2$, R. Augusiak$^3$, A. Ac\'in$^{1,4}$}
\address{$^1$ ICFO-Institut de Ciencies Fotoniques, The Barcelona Institute of Science and Technology, 08860 Castelldefels (Barcelona), Spain}
\address{$^2$  Department of Computing and Mathematical Sciences, California Institute of Technology,
1200 E California Blvd, Pasadena, CA 91125, United States}
\address{$^3$ Center for Theoretical Physics, Polish Academy of Sciences, Aleja Lotnik\'{o}w 32/46, 02-668 Warsaw, Poland}
\address{$^4$ ICREA--Institucio Catalana de Recerca i Estudis Avan\c{c}ats, Lluis Companys 23, 08010 Barcelona, Spain}
\ead{ivan.supic@icfo.es}

\date{\today}

\begin{abstract}
Finding ways to test the behaviour of quantum devices is a timely enterprise, especially in the light of the rapid development of quantum technologies. Device-independent self-testing is one desirable approach, as it makes minimal assumptions on the devices being tested. In this work, we address the question of which states can be self-tested. This has been answered recently in the bipartite case \cite{Coladangelo}, while it is largely unexplored in the multipartite case, with only a few scattered results, using a variety of different methods: maximal violation of a Bell inequality, numerical SWAP method, stabilizer self-testing etc. In this work, we investigate a simple, and potentially unifying, approach: combining projections onto two-qubit spaces (projecting parties or degrees of freedom) and then using maximal violation of the tilted CHSH inequalities. This allows to obtain self-testing of Dicke states and partially entangled GHZ states with two measurements per party, and also to recover self-testing of graph states (previously known only through stabilizer methods). Finally, we give the first self-test of a class multipartite qudit states: we generalize the self-testing of partially entangled GHZ states by adapting techniques from \cite{Coladangelo}, and show that all multipartite states which admit a Schmidt decomposition can be self-tested with few measurements. 
\end{abstract}
\noindent{\it Keywords\/}:self-testing, device-independence, multipartite entanglement

\section{Introduction}

The rapid development of quantum technologies in recent years creates an urgent need for certification tools. Quantum computing and quantum simulation are state of the art tasks which require verifiable realizations. One way to certify the correct functioning of a quantum computer would be to ask it to solve a problem that is thought to be hard for a classical computer, like factoring large numbers and simply checking the correctness of the solution. However, it is conjectured that the class of problems that can be solved efficiently on a quantum computer (BQP) has elements outside the class of problems whose solution can be checked classically (NP) \cite{Aaronson}, which makes this type of verification incomplete. Thus, efforts have been made towards building reliable certification protocols for quantum systems performing universal quantum computing or quantum simulations \cite{UBQC,AGKE, RUV}. In this work, we investigate one of the basic building blocks for such verifications tasks, namely certification of a particular quantum state.

A canonical way to approach {\ivan{the problem of certification of quantum states}} is to exploit tomographic protocols \cite{bible}. Unfortunately, quantum devices performing tasks such as quantum computation typically involve multipartite quantum states and the complexity of tomographic techniques scales exponentially with the number of particles involved. Moreover, they demand a set of trusted measurements, which in certain scenarios is not an available resource.

An alternative technique able to positively address these problems is \textit{self-testing} \cite{MY}. Contrary to quantum state and process tomography, self-testing is a completely device-independent task. It aims to verify that a given quantum device operates on a certain quantum state, and performs certain measurements on it, solely from the correlations it generates.
The building block for this, as well as for all other device-independent protocols is Bell's theorem \cite{Bell}, which says that correlations violating Bell inequalities do not admit local hidden-variable models. Thus, correlations useful for self-testing must be non-local. Self-testing was formally introduced by Mayers and Yao \cite{MY}. Since then, there has been growing interest in designing self-testing methods \cite{MYS,Wang,Cedric,SASA,jed2}, and studying their robustness \cite{MYS,swap,jed}. An important recent development shows that all pure entangled bipartite states can be self-tested \cite{Coladangelo}. {\ivan {As for applications, self-testing methods have successfully been employed in protocols for quantum key distribution \cite{VV14, Miller14}, randomness expansion \cite{Miller14}, and verification of quantum computations \cite{RUV,leash,graph,GWK}}}.

Most of the currently known self-testing protocols, however, are tailored to bipartite states, leaving the multipartite scenario rather unexplored. The known examples cover only the tripartite $W$ state, a class of partially entangled tripartite states \cite{Singapur,Pal} and graph states \cite{McKague}.
The aim of this paper is to extend the class of multipartite states that can be self-tested, by investigating a simple approach that exploits the well-understood self-testing of two-qubit states. At a high level, this is done by combining projections to two-qubit spaces and then exploiting maximal violation of tilted CHSH inequalities. Using this potentially unifying approach, we show self-testing of all Dicke states and partially entangled GHZ states with only two measurements per party. We also show that our method efficiently applies also to self-testing of graph states, previously known only through stabilizer state methods, with a slight improvement in the number of measurement settings per party. Finally, using techniques from \cite{Coladangelo} as a building block, we provide the first self-testing result for a class of multipartite qudit states, by showing that all multipartite qudit states which possess a Schmidt decomposition can be self-tested, with at most four measurements per party\footnote{{\ivan{By "$n$ measurements per party" we mean that each party has $n$ different measurement settings, not that the total number of measurement rounds performed in the experiment  by each party is equal to $n$.}}}.\\


\section{Preliminaries}\label{sec:prel}

Self-testing is a device-independent task \cite{DI} whose aim is to characterize the form of the quantum state and measurements solely from the correlations that they generate. To introduce it formally, consider $N$ non-communicating parties sharing some $N$-partite state $\ket{\psi}$. On its share of this state, party $i$ can perform one of several projective measurements  $ \{M^{a_i}_{x_i,i}\}_{a_i}$, labelled by  $x_i \in \mathcal{X}_i$, with possible outcomes $a_i \in \mathcal{A}_i$. Here $\mathcal{X}_i$ and $\mathcal{A}_i$ stand for finite alphabets of possible questions and answers for party $i$.
The experiment is characterized by a collection of conditional probabilities %
%
$\{p(a_1,\ldots,a_N|x_1,\ldots,x_N): a_i \in \mathcal{A}_i\}_{x_i \in \mathcal{X}_i}$, where 
\begin{equation}\label{eq:corrVec}
p(a_1,\ldots,a_N|x_1,\ldots,x_N)
=\langle\psi|M_{x_1,1}^{a_1}\otimes\ldots\otimes M_{x_N,N}^{a_N}|\psi \rangle
\end{equation}
is the probability of obtaining outputs $a_1,\ldots,a_N$ upon performing the measurements $x_1,\ldots,x_N$\footnote{We take the parties' measurements to be projective, invoking Naimark's dilation theorem. We take the joint state to be pure for ease of exposition, but we emphasize that all of our proofs hold analogously starting from a joint mixed state.}. We refer to this as a \textit{correlation}. It is sometimes convenient to describe correlations with the aid of standard correlators, where instead of measurement operators $M_{x_i}^{a_i}$ one uses Hermitian observables with eigenvalues $\pm 1$. Now, we can formally define self-testing in the following way.

\begin{defn}[Self-testing]
We say that a correlation $\{p(a_1,\ldots,a_N|x_1,\ldots,x_N): a_i \in \mathcal{A}_i\}_{x_i \in \mathcal{X}_i}$ self-tests the state $\ket{\psi'}$ and measurements $\{{\tilde{M}}_{x_i,i}^{a_i}\}_{a_i}$, $i=1,\ldots,N$, if for any state and measurements $\ket{\psi}$ and $\{M_{x_i,i}^{a_i}\}_{a_i}$, $i=1,\ldots,N$, reproducing the correlation, there exists a local isometry $\Phi= \Phi_1\otimes\ldots\otimes \Phi_N$ such that 
\begin{equation}\label{selftest}
\Phi(M_{x_1,1}^{a_1}\otimes\ldots\otimes M_{x_N,N}^{a_N}\ket{\psi})=\ket{\mathrm{aux}}\otimes ({\tilde{M}}_{x_1,1}^{a_1}\otimes\ldots\otimes {\tilde{M}}_{x_N,N}^{a_N}\ket{\psi'}).
\end{equation}
where $\ket{\mathrm{aux}}$ is an auxiliary state.
\end{defn}

{\ivan{Intuitively, we can think of the self-testing correlations as characterizing the state and measurements that achieve them, up to a local isometry.}} {\ivan{Indeed, this is the best possible characterization that one can hope for. In fact, one can never characterize the state and measurements exactly by the observed correlation: this is because correlations are invariant under local unitary transformations or embeddings in Hilbert spaces of higher dimension.}} {\ivan{Moreover, we point out that it is not possible to self-test mixed states: this is because for any correlation that can be obtained by measuring a mixed state, there exists a pure state of the same dimension that can be measured to obtain the same correlations \cite{Sikora16}. }} 

In some cases the existence of the required isometry can be proven solely from the maximal violation of some Bell inequality. For instance, all two-qubit pure entangled states can be self-tested with a one-parameter class of tilted CHSH Bell inequalities \cite{Cedric} given by
\begin{equation}\label{tCHSH}
\alpha\langle A_0\rangle+\langle A_0B_0\rangle+\langle A_0B_1\rangle+\langle A_1B_0\rangle-\langle A_1B_1\rangle\leq 2+\alpha,
\end{equation}
where $\alpha\geq 0$ and $A_i$ and $B_i$ are  observables with outcomes $\pm 1$ measured by the parties. Note that for $\alpha=0$, (\ref{tCHSH}) reproduces the well-known CHSH Bell inequality \cite{CHSH}. For further purposes let us briefly recall this result. Here $\sigma_z$ and $\sigma_x$ are the standard Pauli matrices.

\begin{lem}[\cite{Cedric}]\label{LemmaTilt}
Suppose a bipartite state $\ket{\psi}$ and dichotomic observables $A_i$ and $B_i$ achieve the maximal quantum violation of the tilted CHSH inequality (\ref{tCHSH}) $\sqrt{8+2\alpha^2}$, for some $\alpha$. 
Let $\theta, \mu \in (0,\pi/2)$ be such that $\sin 2\theta = [(4-\alpha^2)/(4+\alpha^2)]^{\frac{1}{2}}$ and $\mu = \arctan \sin 2\theta$. Let $Z_A = A_0$, $X_A = A_1$. Let $Z^*_B$ and $X^*_B$ be respectively $(B_0 + B_1)/2\cos \mu$ and $(B_0 - B_1)/2\sin \mu$, but with all zero eigenvalues replaced by one, and define $Z_B = Z^*_B|Z^*_B|^{-1}$ and $X_B = X^*_B|X^*_B|^{-1}$. Then, we have
\begin{eqnarray}
Z_A\ps = Z_B \ps, \label{eq4} \\
\cos\theta X_A (\mathds{1}-Z_A) \ps =  \sin\theta X_B (\mathds{1}+Z_A) \ps, \label{eq5}\\
\{Z_A,X_A\}\ps = 0, \qquad \{Z_B,X_B\}\ps = 0. \nonumber 
\end{eqnarray}
Moreover, there exists a local isometry $\Phi$ such that $\Phi(A_i\otimes B_j\ket{\psi})=\ket{\mathrm{aux}}\otimes (\tilde{A}_i\otimes \tilde{B}_j)\ket{\psi_{\theta}}$, where $\ket{\psi_{\theta}}=\cos\theta\ket{00}+\sin\theta\ket{11}$, and $\tilde{A}_0=\sigma_z$, $\tilde{A}_1=\sigma_x$, and $\tilde{B}_{0/1}=\cos\mu\sigma_{z}\pm\sin\mu\sigma_x$.
\end{lem}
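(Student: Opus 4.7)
The plan is to exploit a sum-of-squares (SOS) decomposition of the shifted tilted-CHSH Bell operator. Writing $\mathcal{B}_\alpha = \alpha A_0 + A_0 B_0 + A_0 B_1 + A_1 B_0 - A_1 B_1$, the Bamps--Pironio approach is to find explicit positive coefficients $\lambda_i$ and operators $P_i$ polynomial in $A_0,A_1,B_0,B_1$ such that
\begin{equation}
\sqrt{8+2\alpha^2}\,\mathds{1} - \mathcal{B}_\alpha = \sum_i \lambda_i\, P_i^\dagger P_i.
\end{equation}
Maximal violation $\bra{\psi}\mathcal{B}_\alpha\ket{\psi}=\sqrt{8+2\alpha^2}$ then forces $P_i\ket{\psi}=0$ for every $i$. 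Two of these constraints carry the essential self-testing content; the remaining ones ensure that the ``contractions'' below behave well on the state.

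The first key SOS term gives, after rearrangement, the $Z$-alignment relation $Z_A\ket{\psi}=Z_B^*\ket{\psi}$ with $Z_B^*=(B_0+B_1)/(2\cos\mu)$, and the second yields $\cos\theta\, X_A(\mathds{1}-Z_A)\ket{\psi}=\sin\theta\, X_B^*(\mathds{1}+Z_A)\ket{\psi}$ with $X_B^*=(B_0-B_1)/(2\sin\mu)$. Because $Z_B^*$ and $X_B^*$ are only contractions in general, I would promote them to genuine $\pm 1$ observables. The key observation is that on the support of $\ket{\psi}$ both operators act isometrically: $Z_A^2=\mathds{1}$ together with the $Z$-relation forces $\bra{\psi}(Z_B^*)^2\ket{\psi}=1$, and an analogous computation using the $X$-relation handles $X_B^*$. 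Hence the regularization prescribed in the statement (replace zero eigenvalues by one and then polar-decompose via $|\cdot|^{-1}$) modifies the operators only on the orthogonal complement of the support, and the resulting $\pm 1$ observables $Z_B, X_B$ still satisfy (\ref{eq4}) and (\ref{eq5}).

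With these operator identities in hand, I would construct the standard SWAP isometry $\Phi=\Phi_A\otimes\Phi_B$: append ancillas $A',B'$ initialized to $\ket{0}$, apply Hadamards on both ancillas, then controlled-$Z_A$ and controlled-$Z_B$ from the ancillas onto the physical systems, another pair of Hadamards, and finally controlled-$X_A$ and controlled-$X_B$. A direct computation, using (\ref{eq4}) to propagate the $Z$-controls through the state and (\ref{eq5}) to evaluate the $X$-controls on the $(\mathds{1}\pm Z_A)\ket{\psi}$ components, produces $\Phi(\ket{\psi})=\ket{\mathrm{junk}}\otimes(\cos\theta\ket{00}+\sin\theta\ket{11})$ on the ancillas. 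Repeating the same computation with $A_i\otimes B_j$ inserted before $\Phi$, and substituting $A_0=Z_A$, $A_1=X_A$, $B_0=\cos\mu\, Z_B^*+\sin\mu\, X_B^*$, $B_1=\cos\mu\, Z_B^*-\sin\mu\, X_B^*$, reads off the claimed $\tilde A_i=\sigma_{z/x}$ and $\tilde B_{0/1}=\cos\mu\,\sigma_z\pm\sin\mu\,\sigma_x$ acting on the ancillas.

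The main obstacle is the regularization step: one must check carefully that the transition from the contractions $Z_B^*, X_B^*$ to genuine $\pm 1$ observables preserves both SOS-derived identities and, moreover, that the substitution $B_0,B_1\leftrightarrow Z_B^*,X_B^*$ remains valid inside the SWAP computation (since controlled-$Z_B$ uses the regularized version, while the factor $\cos\mu\,Z_B^*+\sin\mu\,X_B^*$ appearing in $B_0$ uses the unregularized one). Once this bookkeeping is in place, every remaining step is a mechanical verification; the nontrivial ingredient is the existence of the SOS decomposition itself, which I would simply cite from \cite{Cedric}.
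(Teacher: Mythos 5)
Your outline is correct and matches the justification the paper itself relies on: the paper does not prove Lemma~\ref{LemmaTilt} but imports it wholesale from \cite{Cedric}, whose argument is exactly the sum-of-squares decomposition of the shifted tilted-CHSH operator, the resulting operator identities \eqref{eq4}--\eqref{eq5}, the regularization of the contractions into genuine $\pm1$ observables, and the SWAP isometry of Fig.~\ref{fig:method}. Since you correctly identify the regularization bookkeeping as the only delicate step and defer the explicit SOS decomposition to the cited reference, your proposal is essentially the same proof the paper points to.
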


A typical construction of the isometry  $\Phi$ is the one encoding the SWAP gate, as illustrated in Figure \ref{fig:method}. 
\begin{figure}[h!]
\centering
\includegraphics[width=0.45\columnwidth]{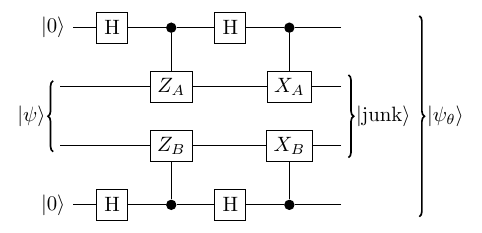}
\caption{Example of a circuit that takes as input a state $\ket{\psi}$ satisfying \ref{eq4}-\ref{eq5}, adds two ancillas, each in $\ket{0}$, and outputs the state $\ket{\psi_{\theta}}$ in tensor product with an auxiliary state $\ket{\mathrm{aux}}$. Here $H$ is the usual Hadamard gate. \label{fig:method}}
\end{figure}
 
Our aim in this paper is to exploit the above result to develop methods for self-testing multipartite entangled quantum states. Given an $N$-partite entangled state $\ket{\psi}$, the idea is that $N-2$ chosen parties perform local measurements on their shares of $\ket{\psi}$ and the remaining two parties check whether the projected state they share violates maximally (\ref{tCHSH}) for the appropriate $\alpha$ (we can think of this as a sub-test). This procedure is repeated for various subsets of $N-2$ parties until the correlations imposed are sufficient to characterize the state $\ket{\psi}$. Our approach is inspired by Ref. \cite{Singapur}, which shows that any state in the class $(\ket{100}+\ket{101}+\alpha\ket{001})/\sqrt{2+\alpha^2}$, containing the three-qubit $W$ state, can be self-tested in this way. We will show that this approach can be generalized in order to self-test new (and old) classes of multipartite states. The main challenge is to show that all the sub-tests of different pairs of parties are compatible. To be more precise, for a generic state there will always be a party which will be involved in several different sub-tests and, in principle, will be required to use different measurements to pass the different tests. Consequently, isometries (Figure \ref{fig:method}) corresponding to different sub-tests are in principle constructed from different observables. However, a single isometry is required in order to self-test the global state. Overcoming the problem of building a single isometry from several different ones is the key step to achieve a valid self-test for multipartite states. For states that exhibit certain symmetries, this can be done efficiently with few measurements. We leave for future work the exploration for states that do not have any particular symmetry.

In the $N$-partite scenario, parties will be denoted by numbers from $1$ to $N$ and measurement observables by capital letters with a superscript denoting the party. For a two-outcome observable $W$, we denote by $W^{(\pm)} = (\idd \pm W)/2$ the projectors onto the $\pm 1$ eigenspaces. We use the notation $\lfloor a \rfloor$ to denote the biggest integer $n$ such that $n \leq a$, while $\lceil a \rceil$ is the smallest $n$ such that $n \geq a$. \\
 
\section{Results}\label{sec:res}

In this work, we expand the class of self-testable multipartite states. More precisely, in subsection \ref{subsection_schmidt} we show that all multipartite partially entangled GHZ (qubit) states can be self-tested with two measurements per party. Then, we make use of this result as a building block to extend self-testing to all multipartite entangled Schmidt-decomposable qudit states, of any local dimension $d$, with only three measurements per party (except one party has four). To the best of our knowledge, this is the first self-test for multipartite states of qudits, for $d>2$. Finally, in subsections \ref{subsec:Dicke} and \ref{subsec:graph} we apply the approach used for multipartite partially entangled GHZ (qubit) states to the self-test the classes of Dicke states and graph states (previously known to be self-testable through stabilizer methods \cite{McKague}).
\subsection{All multipartite entangled qudit Schmidt states}
\label{subsection_schmidt}

While in the bipartite setting all states admit a Schmidt decomposition, in the general multipartite setting this is not the case. We refer to those multipartite states that admit a Schmidt decomposition as Schmidt states. By a Schmidt state we consider a multipartite state $\ket{\Psi}$ acting on $\mathbb{C}^{d_1}\otimes\mathbb{C}^{d_2} \cdots \otimes\mathbb{C}^{d_N}$ that by local unitary transformations can be brought to the following form
\begin{equation}\label{SchmidtState}
\ket{\Psi} = \sum_{j=0}^{d-1} c_j \ket{j}^{\otimes N}
\end{equation}
where $d = \min_i d_i$, $0< c_j < 1$ for all $i$ and $\sum_{j=0}^{d-1}c_j^2 = 1$. Typicality of multipartite states that are equivalent to some Schmidt state up to a local unitary is discussed in \cite{Peres,Thapliyal}.

Our proof that all multipartite entangled Schmidt states can be self-tested follows closely the ideas from \cite{Coladangelo}, while leveraging as a building block our novel self-testing result for partially entangled GHZ states. Thus, we proceed by first proving a self-testing theorem for multipartite partially entangled qubit GHZ states.\\

\textit{\textbf{Multipartite partially entangled GHZ states.}} Multipartite qubit Schmidt states, also known as partially entangled GHZ states, are of the form
\begin{equation}
\ket{\mathrm{GHZ}_N(\theta)}=\cos\theta\ket{0}^{\otimes N}+
\sin\theta\ket{1}^{\otimes N}
\end{equation}
where $\theta\in(0,\pi/4]$ and $\ket{\mathrm{GHZ}_N(\pi/4)}=\ket{\mathrm{GHZ}_N}$ is the standard $N$-qubit GHZ state. The form of this state is such that if any subset of $N-2$ parties performs a 
$\sigma_X$ measurement, the collapsed state
shared by the remaining two parties is $\cos\theta\ket{00}\pm\sin\theta\ket{11}$, depending on the parity of the measurement outcomes. As already mentioned, these states can be self-tested with the aid of inequality (\ref{tCHSH}), which is the main ingredient of our self-test of $\ket{\mathrm{GHZ}_N(\theta)}$. 

\begin{theorem}
\label{thm-PEGHZ}
Let $\ket{\psi}$ be an $N$-partite state, and let $A_{0,i},A_{1,i}$ be a pair of binary observables for the $i$-th party, for $i=1,\ldots,N$. Suppose the following correlations are satisfied:
\begin{eqnarray} \label{prvap}
\fl \bra{\psi}A_{0,i}^{(+)}\ket{\psi} = \bra{\psi}A_{0,i}^{(+)}A_{0,j}^{(+)}\ket{\psi} = \cos^2\theta ,\qquad \forall i,j \in \{1,\ldots,N-1\}\\ \label{drugad}
\fl \bra{\psi}\prod_{i=1}^{N-2} A_{1,i}^{(a_i)}\ket{\psi} = \frac{1}{2^{N-2}},\qquad \forall a \in \{+,-\}^{N-2} \\ \label{trecat}
\fl \bra{\psi}\prod_{i=1}^{N-2} A_{1,i}^{(a_i)}(\alpha A_{0,N-1} + A_{0,N-1} A_{0,N} + A_{0,N-1} A_{1,N}+ (-1)^{h(a)} A_{1,N-1} A_{0,N} \\ 
\fl \hspace{2cm}- (-1)^{h(a)} A_{1,N-2}A_{1,N-1} )\ket{\psi} 
= \frac{\sqrt{8+2\alpha^2}}{2^{N-2}},\qquad \forall a \in \{+,-\}^{N-2} 
\end{eqnarray}
where $h(a)$ denotes the parity of the number of ``$-$'' in $a$, and $\alpha=2\cos2\theta/\sqrt{1+\sin^22\theta}$. Let $\mu$ be such that $\tan{\mu} = \sin{2\theta}$. Define $Z_i = A_{0,i}$ and $X_i = A_{1,i}$, for $i=1,\ldots,N-1$. Then, let $Z'_N = (A_{0,N} + A_{1,N})/2\cos{\mu}$, and let $Z^*_N $ be $Z'_N$ with zero eigenvalues replaced by $1$. Define $Z_N = Z^*_N |Z^*_N  |^{-1}$. Define $X_N$ similarly starting from $X'_N = (A_{0,N} - A_{1,N})/2\sin{\mu}$. Then,
\begin{eqnarray}
Z_1 \ps=  \cdots = Z_N \ps, \label{eq12}\\
X_1\cdots X_D(I - Z_1)\ps  = \tan{\theta} (I + Z_1)\ps. \label{eq13}
\end{eqnarray}
\end{theorem}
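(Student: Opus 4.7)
The approach is driven by the observation that, under conditions \eqref{drugad} and \eqref{trecat}, conditioning on the outcomes $a\in\{+,-\}^{N-2}$ of the measurements $A_{1,1},\dots,A_{1,N-2}$ produces a sub-normalized bipartite state on parties $N-1$ and $N$ that maximally violates the tilted CHSH inequality \eqref{tCHSH}, up to a parity-dependent sign flip on Alice's second observable. Lemma \ref{LemmaTilt} should then apply to each conditional state, and a linear average over $a$ (weighted either by $1$ or by $(-1)^{h(a)}$) should deliver the two global relations \eqref{eq12}--\eqref{eq13}.

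Concretely, I would first use \eqref{prvap} with Cauchy--Schwarz: the identity $\langle\psi|A_{0,j}^{(+)}A_{0,i}^{(+)}|\psi\rangle=\cos^2\theta$ saturates the inequality $\|A_{0,i}^{(+)}\ket\psi\|\cdot\|A_{0,j}^{(+)}\ket\psi\|=\cos^2\theta$, forcing $A_{0,i}^{(+)}\ket\psi=A_{0,j}^{(+)}\ket\psi$ for all $i,j\leq N-1$, and hence $Z_i\ket\psi=Z_j\ket\psi$ throughout that range. Next, set $\ket{\psi_a}:=\prod_{i=1}^{N-2}A_{1,i}^{(a_i)}\ket\psi$. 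By \eqref{drugad} its squared norm is $1/2^{N-2}$, and by \eqref{trecat} the expectation on $\ket{\psi_a}/\|\ket{\psi_a}\|$ of the tilted CHSH operator with Alice observables $\{A_{0,N-1},(-1)^{h(a)}A_{1,N-1}\}$ and Bob observables $\{A_{0,N},A_{1,N}\}$ equals the maximum $\sqrt{8+2\alpha^2}$. Crucially, the operators $Z_N$ and $X_N$ constructed in the theorem statement are precisely those produced by Lemma \ref{LemmaTilt} on Bob's side, and they are independent of $a$. The lemma therefore yields, for every $a$,
\begin{align*}
A_{0,N-1}\ket{\psi_a}&=Z_N\ket{\psi_a},\\
\cos\theta\,(-1)^{h(a)}A_{1,N-1}(\idd-A_{0,N-1})\ket{\psi_a}&=\sin\theta\,X_N(\idd+A_{0,N-1})\ket{\psi_a}.
\end{align*}

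Summing the first of these over $a$, using $\sum_a\prod_iA_{1,i}^{(a_i)}=\idd$ and the fact that $A_{0,N-1}$ commutes with the conditioning projectors, gives $Z_{N-1}\ket\psi=Z_N\ket\psi$, which completes \eqref{eq12}. For \eqref{eq13}, I would multiply the second identity by $(-1)^{h(a)}$ and sum, exploiting
\[\sum_a(-1)^{h(a)}\prod_{i=1}^{N-2}A_{1,i}^{(a_i)}=A_{1,1}\cdots A_{1,N-2}=X_1\cdots X_{N-2},\]
to obtain
\[\cos\theta\,X_{N-1}(\idd-Z_{N-1})\ket\psi=\sin\theta\,X_1\cdots X_{N-2}X_N(\idd+Z_{N-1})\ket\psi.\]
Left-multiplying by $X_1\cdots X_{N-2}X_N$ (using $X_i^2=\idd$ and inter-party commutativity) and substituting $Z_{N-1}\ket\psi=Z_1\ket\psi$ then produces \eqref{eq13} with $D=N$. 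The main obstacle is really just the sign bookkeeping: verifying that the specific coefficients $\pm(-1)^{h(a)}$ appearing in \eqref{trecat} are exactly what is needed to recast each sub-test as a standard tilted CHSH after flipping $A_{1,N-1}$, so that $Z_N$ and $X_N$ emerge uniformly across all $a$ and the two weighted sums over outcomes telescope cleanly into global statements on $\ket\psi$.
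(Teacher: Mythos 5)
Your proposal is correct and follows essentially the same route as the paper's proof in Appendix C: Cauchy--Schwarz on \eqref{prvap} for the first $N-1$ parties, maximal tilted-CHSH violation of each conditional state $\prod_i A_{1,i}^{(a_i)}\ket{\psi}$ (with Alice's second observable flipped by $(-1)^{h(a)}$) feeding into Lemma \ref{LemmaTilt}, and then resumming over outcomes. Your parity-weighted sum $\sum_a(-1)^{h(a)}\prod_i A_{1,i}^{(a_i)}=X_1\cdots X_{N-2}$ is just the paper's expansion of $X_1\cdots X_N(\idd-Z_1)\ket{\psi}$ into projectors read in the opposite direction, so the two arguments coincide.
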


\noindent \textit{Proof:} We refer the reader to \ref{appendix_C} for the formal proof of this Theorem, while providing here an intuitive understanding of the correlations given above. The first equation (\ref{prvap}) defines the existence of one measurement observable, whose marginal carries the information of angle $\theta$. The straightforward consequence of it is  (\ref{eq12}), which is analogue to  (\ref{eq4}). On the other hand, (\ref{drugad}) involves a different measurement observable with zero marginal, 
 while  (\ref{trecat}) shows that when the first $N-2$ parties perform this zero marginal measurement the remaining two parties maximally violate the corresponding tilted CHSH inequality, i.e. the reduced state is self-tested to be the partially entangled pair of qubits. Note that (\ref{eq13}) is analogue to (\ref{eq5}).  \\

As a corollary, these correlations self-test the state $\ket{\mathrm{GHZ}_N(\theta)}$.

\begin{cor}
Let $\ket{\psi}$ be an $N$-partite state, and let $A_{0,i},A_{1,i}$ be a pair of binary observables for the $i$th party, for $i=1,\ldots,N$. Suppose they satisfy the correlations of Theorem \ref{thm-PEGHZ}. Then, there exists a local isometry $\Phi$ such that
\begin{equation}
\Phi(\ket{\psi}) = \ket{\mathrm{aux}}\ket{\mathrm{GHZ}_N(\theta)}
\end{equation}
\end{cor}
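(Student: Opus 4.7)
The plan is to invoke Theorem~\ref{thm-PEGHZ} to secure the two operator identities \eqref{eq12} and \eqref{eq13}, and then build the standard local SWAP-gate isometry of Fig.~\ref{fig:method} at each party, using the party-$i$ observables $Z_i$ and $X_i$ defined in the theorem. That is, $\Phi = \Phi_1 \otimes \cdots \otimes \Phi_N$, where each $\Phi_i$ adjoins an ancilla qubit in $\ket{0}$, applies a Hadamard, a controlled-$Z_i$, a Hadamard, and finally a controlled-$X_i$. Since all $Z_i$ (including the dressed $Z_N$) and $X_i$ (including the dressed $X_N$) are unitary with spectrum $\pm 1$, the squaring identities $Z_i^2 = X_i^2 = I$ needed to analyze the circuit are all in force.

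Next, I would expand $\Phi(\ket{\psi}\otimes \ket{0}^{\otimes N})$ into the standard sum over ancilla basis vectors:
\begin{equation*}
\Phi(\ket{\psi}\otimes \ket{0}^{\otimes N})
= \frac{1}{2^N}\sum_{a\in\{0,1\}^N} X_1^{a_1}\cdots X_N^{a_N}\Bigl(\prod_{i=1}^{N}\bigl(I + (-1)^{a_i}Z_i\bigr)\Bigr)\ket{\psi}\otimes \ket{a}.
\end{equation*}
Using \eqref{eq12}, each factor $(I+(-1)^{a_i}Z_i)$ can be replaced by $(I+(-1)^{a_i}Z_1)$ when acting on $\ket{\psi}$ (commuting the outer $Z_i$'s past the others, since they live on different tensor factors). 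Then the fact that $Z_1^2=I$ forces any string $a$ with mixed values to annihilate the state, leaving only $a=(0,\ldots,0)$ and $a=(1,\ldots,1)$ with contribution $2^{N-1}(I\pm Z_1)\ket{\psi}$. At this stage I am left with
\begin{equation*}
\Phi(\ket{\psi}\otimes \ket{0}^{\otimes N})
= \tfrac{1}{2}\bigl[(I+Z_1)\ket{\psi}\otimes \ket{0}^{\otimes N} + X_1\cdots X_N(I-Z_1)\ket{\psi}\otimes \ket{1}^{\otimes N}\bigr].
\end{equation*}

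Now I would apply \eqref{eq13} to rewrite $X_1\cdots X_N(I-Z_1)\ket{\psi} = \tan\theta\,(I+Z_1)\ket{\psi}$, which factors the whole expression as
\begin{equation*}
\Phi(\ket{\psi}\otimes \ket{0}^{\otimes N})
= \tfrac{1}{2\cos\theta}(I+Z_1)\ket{\psi}\otimes \bigl(\cos\theta\ket{0}^{\otimes N} + \sin\theta\ket{1}^{\otimes N}\bigr).
\end{equation*}
Setting $\ket{\mathrm{junk}} := \tfrac{1}{2\cos\theta}(I+Z_1)\ket{\psi}$ yields the desired form $\ket{\mathrm{junk}}\otimes\ket{\mathrm{GHZ}_N(\theta)}$. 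To confirm that $\ket{\mathrm{junk}}$ is properly normalised, I would use the marginal condition \eqref{prvap}, $\bra{\psi}A_{0,1}^{(+)}\ket{\psi}=\cos^2\theta$, which implies $\bra{\psi}Z_1\ket{\psi}=\cos 2\theta$ and hence $\|(I+Z_1)\ket{\psi}\|^2 = 2(1+\cos 2\theta)=4\cos^2\theta$.

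The substantive content is all supplied by Theorem~\ref{thm-PEGHZ}; the remaining work in the corollary is bookkeeping, and the only point requiring a bit of care is the inductive collapse of the $(I+(-1)^{a_i}Z_i)$ product, which relies on the mutual commutation of operators on distinct parties and on $Z_1^2=I$. I expect this to be routine, so the main ``obstacle,'' such as it is, is just organising the algebra cleanly and checking the normalisation of the junk state.
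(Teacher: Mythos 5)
Your proof is correct and is essentially the paper's own argument: the paper obtains the corollary as the $d=2$ special case of Lemma \ref{yngen} (with $P_i^{(k)}=[I+(-1)^kZ_i]/2$), and the proof of that lemma constructs exactly the SWAP-gate isometry you write out, the quantum Fourier transforms reducing to Hadamards for $d=2$. You have simply unrolled that general construction explicitly, with the same collapse of the ancilla sum via \eqref{eq12}, the same use of \eqref{eq13} to produce the $\tan\theta$ weight, and a correct normalisation check from \eqref{prvap}.
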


\noindent \textit{Proof:} This follows as a special case ($d=2$) of Lemma \ref{yngen} stated below, upon defining $P_i^{(k)} =[I+ (-1)^{k}Z_i]/2$, for $k \in \{0,1\}$.

As one can expect, the ideal measurements achieving these correlations are: $A_{0,i} = \sigma_z$, $A_{1,i} = \sigma_x$, for $i=1,\ldots,N-1$, and $A_{0,N} = \cos{\theta} \sigma_z + \sin{\theta} \sigma_x$, $A_{1,N} =  \cos{\theta} \sigma_z - \sin{\theta} \sigma_x$. We refer to the correlations achieved by these ideal measurements as the \textit{ideal correlations} for multipartite entangled GHZ states. \\

\textit{\textbf{All multipartite entangled qudit Schmidt states.}} The generalisation of Theorem \ref{thm-PEGHZ} to all multipartite qudit Schmidt states is then an adaptation of the proof in \cite{Coladangelo} for the bipartite case, with the difference that it uses as a building block the $\ket{\mathrm{GHZ}_N(\theta)}$ self-test that we just developed, instead of the tilted CHSH inequality. 

We begin by stating a straightforward generalisation to the multipartite setting of the criterion from \cite{YN} which gives sufficient conditions for self-testing a Schmidt state. Then, our proof that all multipartite entangled qudit Schmidt states can be self-tested goes through showing the existence of operators satisfying the conditions of such criterion. 

\begin{lem}[Generalisation of criterion from \cite{YN}]
\label{yngen} Let $\ket{\Psi}$ be a state of the form (\ref{SchmidtState}). Suppose there exist sets of unitaries $\{X^{(k)}_{l}\}_{k=0}^{d-1}$, where the subscript $l \in \{1,\dots , N\}$ indicates that the operator acts on the system of the $l$-th party, and sets of projections $\{P^{(k)}_{l}\}_{k =0}^{d-1}$, that are complete and orthogonal for $l=1,\ldots,N-1$ and need not be such for $l=N$, and they satisfy:
\begin{eqnarray}
P^{(k)}_{1}\ket{\psi}=\dots = P^{(k)}_{N}\ket{\psi}, \label{c2}\\
X^{(k)}_{1}\dots X^{(k)}_{N}P^{(k)}_{1}\ket{\psi} = \frac{c_k}{c_0} P^{(0)}_{1}\ket{\psi}\label{c3}
\end{eqnarray}
for all $k=1,\ldots,N$. Then, there exists a local isometry $\Phi$ such that $\Phi(\ket{\psi})=\ket{aux}\otimes\ket{\Psi}$.
\end{lem}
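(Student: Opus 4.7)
I would mimic the standard ``partial SWAP'' isometry of the bipartite criterion in \cite{YN}, lifted to $N$ parties. For each party $l$, append a fresh $d$-dimensional ancilla in $\ket{0}_{A_l}$ and define $\Phi_l$ so that, on the relevant subspace, it acts as
\[
\ket{\phi} \longmapsto \sum_{k=0}^{d-1} \ket{k}_{A_l} \otimes X_l^{(k)} P_l^{(k)} \ket{\phi}.
\]
For $l=1,\ldots,N-1$, the completeness and orthogonality of $\{P_l^{(k)}\}$ combined with unitarity of $X_l^{(k)}$ make this a bona fide isometry (checked directly from $\Phi_l^\dagger \Phi_l = \sum_k P_l^{(k)} = \idd$). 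For party $N$, where $\{P_N^{(k)}\}$ may fail to be complete or orthogonal, I would complete the construction by enlarging the ancilla and attaching a ``waste'' sector built from $\idd-\sum_k P_N^{(k)}$; thanks to \eqref{c2} the waste sector will end up harmless when $\Phi$ is finally applied to $\ps$.

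\textbf{Main computation.} Expanding $\Phi\ps$ produces a sum over strings $(k_1,\ldots,k_N)$, and the first reduction is that only diagonal strings $k_1=\cdots=k_N$ survive. All the $P_l^{(k)}$ commute across parties, and any off-diagonal string must contain a pair of indices $i,j$ with $k_i\neq k_j$ and at least one of $i,j$ in $\{1,\ldots,N-1\}$, since only the $N$th party is exceptional. Using \eqref{c2} to replace $P_j^{(k_j)}\ps$ by $P_i^{(k_j)}\ps$ produces two orthogonal projectors on the same party $i$ acting on $\ps$, which annihilates; commuting the relevant pair to act on $\ps$ first kills the whole term. On the surviving diagonal, iterated use of \eqref{c2} collapses $P_1^{(k)}\cdots P_N^{(k)}\ps$ to simply $P_1^{(k)}\ps$, so we are left with $\Phi\ps = \sum_k \ket{k}^{\otimes N}_A \otimes X_1^{(k)}\cdots X_N^{(k)} P_1^{(k)} \ps$.

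\textbf{Extracting $\ket{\Psi}$.} Condition \eqref{c3} now substitutes $(c_k/c_0)\,P_1^{(0)}\ps$ for the right-hand factor, and the expression factorises as
\[
\Phi\ps \;=\; \Bigl(\tsum_k c_k \ket{k}^{\otimes N}_A\Bigr) \otimes \frac{1}{c_0}\, P_1^{(0)}\ps \;=\; \ket{\Psi}_A \otimes \ket{\mathrm{junk}},
\]
with $\ket{\mathrm{junk}}:=(1/c_0)\,P_1^{(0)}\ps$ automatically normalised because $\Phi$ is an isometry (equivalently, $\|P_1^{(0)}\ps\|=c_0$ is forced).

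\textbf{Main obstacle.} The only genuinely delicate step is writing down $\Phi_N$ as a \emph{local} isometry without assuming completeness or orthogonality of $\{P_N^{(k)}\}$; one has to enlarge the ancilla and define an extra block that both respects locality and restores the isometry property. Fortunately, \eqref{c2} lets one push the role of party $N$ onto party $1$, so whatever completion one chooses, its contribution to $\Phi\ps$ vanishes. Everything else is essentially an $N$-party bookkeeping rerun of the argument in \cite{YN}.
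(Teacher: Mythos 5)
Your overall route is the same as the paper's: the circuit $\bigotimes_l R_{ll'}\bar F_{l'} S_{ll'} F_{l'}\,\mathrm{App}_l$ used in the paper is exactly an implementation of the map $\ket{\phi}\mapsto\sum_k\ket{k}_{A_l}\otimes X_l^{(k)}P_l^{(k)}\ket{\phi}$ that you write down, and your reduction to diagonal strings via \eqref{c2} followed by the substitution of \eqref{c3} is the same computation that the paper carries out (arriving at $\sum_j P_1^{(j)}\ps\otimes\ket{j}^{\otimes N}$ and then at $\frac{1}{c_0}P_1^{(0)}\ps\otimes\ket{\Psi}$).

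The one step that does not work as stated is your construction of $\Phi_N$. A map of the form $\ket{\phi}\mapsto\sum_k\ket{k}\otimes X_N^{(k)}P_N^{(k)}\ket{\phi}+\ket{d}\otimes W\ket{\phi}$ is an isometry if and only if $\sum_k P_N^{(k)}+W^\dagger W=\idd$, which requires $\sum_k P_N^{(k)}\le\idd$; when the $P_N^{(k)}$ are not mutually orthogonal this can fail, since $\idd-\sum_k P_N^{(k)}$ need not be positive and there is then no ``waste block'' $W$ to attach. The hypothesis only grants orthogonality and completeness for parties $1,\ldots,N-1$. The paper's fix is to use \eqref{c2} \emph{before} building the isometry: one first derives $P_N^{(i)}P_N^{(j)}\ps=P_N^{(i)}P_1^{(j)}\ps=P_1^{(j)}P_1^{(i)}\ps=0$ for $i\neq j$, and more generally that the $P_N^{(k)}$ act as mutually orthogonal projections on the whole subspace $\mathcal{A}\ps$ generated by party 1's algebra acting on $\ps$; it then replaces each $P_N^{(k)}$ by its restriction to that subspace (an honest orthogonal projection that agrees with $P_N^{(k)}$ wherever it matters) and completes the family with $\idd$ minus their sum, after which the unitary $Z_N=\sum_k\omega^k\tilde P_N^{(k)}+\idd-\sum_k\tilde P_N^{(k)}$ and hence the local isometry exist. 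Your remark that \eqref{c2} ``lets one push the role of party $N$ onto party $1$'' is the right intuition, but it must be deployed to manufacture genuinely orthogonal projections for party $N$ first, not merely invoked afterwards to argue that an (ill-defined) waste sector vanishes on $\ps$. With that repair the rest of your argument goes through verbatim.
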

\proof{The proof of Lemma \ref{yngen} is a straightforward generalisation of the proof of the criterion from \cite{YN}, and is included in the Appendix for completeness.}

We now describe the self-testing correlations for $\ket{\Psi} = \sum_{j=0}^{d-1} c_j \ket{j}^{\otimes n}$. Their structure is inspired by the self-testing correlations from \cite{Coladangelo} for the bipartite case, and they consist of three $d$-outcome measurements for all but the last party, which has four. We desribe them by first presenting the ideal measurements that achieve them, as we believe this aids understading. Subsequently, we extract their essential properties that guarantee self-testing. 
For a single-qubit observable $A$, denote by $[A]_m$ the observable defined with respect to the basis $\{\ket{2m~\textrm{mod}~d},\ket{(2m+1)~\textrm{mod}~d}\}$. For example, $[\sigma_Z]_m = \ket{2m}\bra{2m} - \ket{2m+1}\bra{2m+1}$. Similarly, we denote by $[A]'_m$ the observable defined with respect to the basis $\{\ket{(2m+1)~\textrm{mod}~d}, \ket{(2m+2)~\textrm{mod}~ d}\}$. We use the notation $\bigoplus A_i$ to denote the direct sum of observables $A_i$.

Let $\mathcal{X}_i$ denote the question alphabet of the $i$-th party, and let $\mathcal{X}_i = \{0,1,2\}$ for $i=1,\ldots,N-1$, and $\mathcal{X}_N = \{0,1,2,3\}$. Let $x_i \in \mathcal{X}_i$ denote a question to the $i$-th party.  The answer alphabets are $\mathcal{A}_i = \{0,1,\ldots,d-1\}$, for $i=1,\ldots,N$.

\begin{defn}[Ideal measurements for multipartite entangled Schmidt states]
\label{id_meas_schmidt}
The $N$ parties make the following measurements on the joint state $\ket{\Psi} = \sum_{j=0}^{d-1} c_j \ket{j}^{\otimes n}$. \\

\noindent\textbf{For $i=1,\ldots,N-1$:}

\begin{itemize}

\item For question $x_i = 0$, the $i$-th party measures in the computational basis $\{\ket{0},\ket{1},\cdots,\ket{d-1}\}$ of its system,
 
\item For $x_i=1$ and $x_i=2$: for $d$ even, in the eigenbases of observables $\bigoplus_{m=0}^{\frac{d}{2}-1} [\sigma_X]_m$ and $\bigoplus_{m=0}^{\frac{d}{2}-1} [\sigma_X]'_m$ respectively, with the natural assignments of $d$ measurement outcomes; for $d$ odd, in the eigenbases of observables $\bigoplus_{m=0}^{\frac{d-1}{2}-1} [\sigma_X]_m\oplus \ket{d-1}\bra{d-1}$ and $\ket{0}\bra{0}\oplus\bigoplus_{m=0}^{\frac{d-1}{2}-1} [\sigma_X]'_m$ respectively.
 
 \end{itemize}

\noindent\textbf{For $i=N$:}
\begin{itemize}

\item For $x_N = 0$ and $x_N = 1$, the party $N$ measures in the eigenbases of $\bigoplus_{m=0}^{\frac{d}{2}-1} [\cos{(\mu_m)}\sigma_Z+\sin{(\mu_m)}\sigma_X]_m$ and $\bigoplus_{m=0}^{\frac{d}{2}-1} [\cos{(\mu_m)}\sigma_Z-\sin{(\mu_m)}\sigma_X]_m$ respectively, with the natural assignments of $d$ measurement outcomes, where $\mu_m = \arctan[\sin(2\theta_m)]$ and $\theta_m  = \arctan(c_{2m+1}/c_{2m})$; for $d$ odd, he measures in the eigenbases of $\bigoplus_{m=0}^{\frac{d-1}{2}-1} [\cos{(\mu_m)}\sigma_Z+\sin{(\mu_m)}\sigma_X]_m\oplus\ket{d-1}\bra{d-1}$ and $\bigoplus_{m=0}^{\frac{d-1}{2}-1} [\cos{(\mu_m)}\sigma_Z-\sin{(\mu_m)}\sigma_X]_m\oplus\ket{d-1}\bra{d-1}$ respectively. 

\item For $x_N=2$ and $x_N=3$: for $d$ even, the $N$-th party measures in the eigenbases of $\bigoplus_{m=0}^{\frac{d}{2}-1} [\cos{(\mu'_m)}\sigma_Z+\sin{(\mu'_m)}\sigma_X]'_m$ and $\bigoplus_{m=0}^{\frac{d}{2}-1} [\cos{(\mu'_m)}\sigma_Z-\sin{(\mu'_m)}\sigma_X]'_m$ respectively, where $\mu'_m = \arctan[\sin(2\theta'_m)]$ and $\theta'_m = \arctan(c_{2m+2}/c_{2m+1})$; for $d$ odd, in the eigenbases of $\ket{0}\bra{0}\oplus\bigoplus_{m=0}^{\frac{d-1}{2}-1} [\cos{(\mu'_m)}\sigma_Z+\sin{(\mu'_m)}\sigma_X]'_m$ and $\ket{0}\bra{0}\oplus\bigoplus_{m=0}^{\frac{d-1}{2}-1} [\cos{(\mu'_m)}\sigma_Z-\sin{(\mu'_m)}\sigma_X]'_m$, respectively.

\end{itemize}

We refer to the correlation specified by the ideal measurements above as the \textit{ideal correlation} for multipartite entangled Schmidt states.

\end{defn}
Next, we will highlight a set of properties of the ideal correlation that are enough to characterize it, in the sense that any quantum correlation that satisfies these properties has to be the ideal one. This also aids understanding of the self-testing proof (Proof of Theorem \ref{thm_schmidt}). In what follows, we will employ the language of correlation tables, which gives a convenient way to describe correlations. In general, let $\mathcal{X}_i$ be the question alphabers and $\mathcal{A}_i$ the answer alphabets. A correlation specifies, for each possible question $x \in \mathcal{X}_1 \times \cdots \times \mathcal{X_N}$, a table $T_{x}$ with entries $T_x(a) = p(a|x)$ for $a \in \mathcal{A}_1 \times \cdots \times \mathcal{A}_N$. For example, we denote the correlation tables for the ideal correlations for multipartite entangled GHZ states from Theorem \ref{thm-PEGHZ} as $T_{x}^{\mbox{ghz}_N(\theta_m)}$, where $x \in \{0,1\}^N$ denotes the question. 

\begin{defn}[Self-testing properties of the ideal correlations for multipartite entangled Schmidt states]
Recall that $\mathcal{X}_i = \{0,1,2\}$ for $i=1,\ldots,N-1$, and $\mathcal{X}_N = \{0,1,2,3\}$. $\mathcal{A}_i = \{0,1,\ldots,d-1\}$, for $i=1,\ldots,N$. 

The self-testing properties of the ideal correlations are:
\begin{itemize}
\item For questions $x \in \{0,1\}^N$, we require $T_{x}$ to be block-diagonal with $2^{\times N}$ blocks $C_{x,m} :=(c_{2m}^2+c_{2m+1}^2) \cdot T_{x}^{\mbox{ghz}_N(\theta_m)}$ corresponding to outcomes in $\{2m,2m+1\}^N$, where the multiplication by the weight is intended entry-wise, and $\theta_m := \arctan \big(c_{2m+1}/c_{2m}\big)$. 
\item For questions with $x_i \in\{0,2\}$, for $i=1,\ldots,N-1$ and $x_N \in \{2,3\}$ we require $T_{x}$ to be block-diagonal with the $2^{\times N}$ blocks "shifted down" by one measurement outcome. These should be $D_{x,m} := (c_{2m+1}^2+c_{2m+2}^2) \cdot T_{f(x_1),\ldots,f(x_{N-1}),g(x_N)}^{\mbox{ghz}_N(\theta_m')}$ corresponding to measurement outcomes in $\{2m+1,2m+2\}^N$, where $\theta'_m := \arctan \big(c_{2m+2}/c_{2m+1}\big)$ and $f(0) = 0$, $f(2) = 1$, $g(2)= 0$, $g(3)=1$.
\end{itemize}
\end{defn}

We are now ready to state the main theorem of this section. 

\begin{theorem}
\label{thm_schmidt}
Let $\ket{\Psi} = \sum_{j=0}^{d-1} c_j \ket{j}^{\otimes N}$, where $0< c_j < 1$ for all $i$ and $\sum_{j=0}^{d-1}c_j^2 = 1$. Suppose $N$ parties exhibit the ideal correlations for multipartite entangled Schmidt states from Definition \ref{id_meas_schmidt} by making local measurements on a joint state $\ket{\psi}$. Then there exists a local isometry $\Phi$ such that $\Phi(\ket{\psi}) = \ket{aux} \otimes \ket{\Psi}$.
\end{theorem}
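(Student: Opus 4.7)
The plan is to apply the generalised YN criterion of Lemma \ref{yngen}, by extracting from the observed measurement operators a family of local projections $\{P^{(k)}_l\}_{k=0}^{d-1}$ and unitaries $\{X^{(k)}_l\}_{k=0}^{d-1}$ satisfying its hypotheses. This adapts the strategy of \cite{Coladangelo} to the multipartite setting, using as its building block the partially entangled GHZ self-test of Theorem \ref{thm-PEGHZ} in place of the tilted CHSH inequality.

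First, we take $P^{(k)}_l$ to be the projector onto outcome $k$ of the $x_l=0$ (computational basis) measurement on party $l$. From the block-diagonal structure of $T_{(0,\ldots,0)}$, whose only non-zero entries are $p(k,\ldots,k\mid 0,\ldots,0)=c_k^2$, the standard perfect-correlation argument yields $P^{(k)}_1\ps=\cdots=P^{(k)}_N\ps$ for every $k$, establishing condition (\ref{c2}).

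Next, we extract pairwise swap operators from each block. For each $m$, the block $C_{x,m}$ of the correlation table for $x\in\{0,1\}^N$ is proportional to the ideal $\ket{\mathrm{GHZ}_N(\theta_m)}$ correlation with $\theta_m=\arctan(c_{2m+1}/c_{2m})$. Defining binary block-restricted observables $Z_{l,m}:=P^{(2m)}_l-P^{(2m+1)}_l$ and $X_{l,m}:=M^{(2m)}_{1,l}-M^{(2m+1)}_{1,l}$ for $l\le N-1$, and analogously for party $N$ from its questions $0$ and $1$, the hypotheses (\ref{prvap})--(\ref{trecat}) of Theorem \ref{thm-PEGHZ} are met on each block (after normalisation of the block weight). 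Its conclusion (\ref{eq13}) then translates into the swap relation
\begin{equation}
X_{1,m}\cdots X_{N,m}\,P^{(2m+1)}_1\ps \;=\; \frac{c_{2m+1}}{c_{2m}}\,P^{(2m)}_1\ps.
\end{equation}
The same argument applied to the shifted correlations (questions with $x_i\in\{0,2\}$ for $i\le N-1$ and $x_N\in\{2,3\}$) produces shifted swap operators $X'_{l,m}$ with $X'_{1,m}\cdots X'_{N,m}P^{(2m+2\bmod d)}_1\ps=(c_{2m+2}/c_{2m+1})\,P^{(2m+1)}_1\ps$.

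Finally, since the two families of pairings together cover every consecutive pair $\{k,k+1\}$ and form a cycle through $\{0,1,\ldots,d-1\}$, we assemble each global unitary $X^{(k)}_l$ as a composition of block swaps traversing from outcome $k$ back to outcome $0$: explicitly $X^{(1)}_l=X_{l,0}$, $X^{(2)}_l=X_{l,0}X'_{l,0}$, $X^{(3)}_l=X_{l,0}X'_{l,0}X_{l,1}$, and so on, extended unitarily as the identity on the orthogonal complement of the relevant blocks. Since operators on different parties commute, $X^{(k)}_1\cdots X^{(k)}_N$ factorises across the block swaps, and telescoping the individual ratios gives $X^{(k)}_1\cdots X^{(k)}_N P^{(k)}_1\ps=(c_k/c_0)\,P^{(0)}_1\ps$, which is condition (\ref{c3}); Lemma \ref{yngen} then concludes. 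The main obstacle is the verification in Step 2: one must rigorously translate the block-diagonal structure of the full correlation tables into the precise algebraic identities (\ref{prvap})--(\ref{trecat}) on the renormalised block state, so that Theorem \ref{thm-PEGHZ} may legitimately be invoked block by block. A secondary subtlety is the odd-$d$ case, in which questions $1$ and $2$ each isolate one extra basis state ($\ket{d-1}$ and $\ket{0}$ respectively); here the cycle becomes an open path, but the complementary coverage of the two shifted measurements still allows every outcome to be reached by a finite composition of swaps.
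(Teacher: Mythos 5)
Your overall architecture matches the paper's: decompose the correlation tables into $2^{\times N}$ blocks, invoke Theorem \ref{thm-PEGHZ} on each (suitably renormalised and unitarised) block to obtain swap relations between consecutive outcomes, chain the ordinary and shifted swaps to build the $X^{(k)}_l$, telescope to get condition \eqref{c3}, and conclude via Lemma \ref{yngen}. However, your Step 1 contains a genuine error in the treatment of the last party. You take $P^{(k)}_N$ to be the projector onto outcome $k$ of question $x_N=0$ and claim that $T_{(0,\ldots,0)}$ has only the diagonal entries $p(k,\ldots,k\mid 0,\ldots,0)=c_k^2$. But by Definition \ref{id_meas_schmidt} party $N$ has no computational-basis measurement: its question $x_N=0$ is the tilted measurement $\bigoplus_m [\cos\mu_m\,\sigma_Z+\sin\mu_m\,\sigma_X]_m$, so $T_{(0,\ldots,0)}$ is only \emph{block}-diagonal with non-trivial $2^{\times N}$ blocks, and already on the ideal state $P^{(2m)}_N\ket{\Psi}$ has support on both $\ket{2m}$ and $\ket{2m+1}$ components. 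Hence $P^{(k)}_N\ket{\psi}\neq P^{(k)}_1\ket{\psi}$ for your choice, and condition \eqref{c2} fails as you have set it up. The perfect-correlation/Cauchy--Schwarz argument only gives equality of the coarse block projectors $P^{(2m)}_N+P^{(2m+1)}_N$ acting on $\ket{\psi}$ with those of the other parties.

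The fix, which is what the paper does, is to define the last party's projections not from any single measurement but from the $Z$ operator extracted by the GHZ sub-test on each block: one unitarises $(A_{0,N}+A_{1,N})/2\cos\mu_m$ restricted to the block subspace, obtains an operator $\tilde Z_{N,m}$ supported there, and sets $P^{(2m)}_N=(\mathds{1}_{\mathcal{C}_m}+\tilde Z_{N,m})/2$, $P^{(2m+1)}_N=(\mathds{1}_{\mathcal{C}_m}-\tilde Z_{N,m})/2$. These need not form a complete orthogonal family --- which is exactly why Lemma \ref{yngen} exempts the $N$-th party from that requirement; the asymmetry in the lemma's hypotheses should have signalled to you that the last party cannot be handled the same way as the others. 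Your Steps 2 and 3 (block-wise application of Theorem \ref{thm-PEGHZ} after renormalisation and unitarisation, the shifted blocks, the chained composition and telescoping, and the odd-$d$ open path) are in substance the paper's argument and are fine once Step 1 is repaired.
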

As we mentioned, the proof of Theorem \ref{thm_schmidt} follows closely the method of \cite{Coladangelo}, and uses as a building block our self-testing of the $n$-partite partially entangled GHZ state. For the details, we refer the reader to \ref{appendix_D}.

\subsection{Symmetric Dicke states}
\label{subsec:Dicke}

Let us now consider the symmetric Dicke states. These are simultaneous eigenstates of the square of the total angular momentum operator $\bi{J}^2$
of $N$ qubits and its projection onto the $z$-axis $J_z$. In a concise way they can be stated as
\be\label{eq:dicke1}
|D_N^k\rangle = {{N}\choose{k}}^{-\frac{1}{2}}\sum_i P_i(\ket{1}^{\otimes k} \ket{0}^{\otimes(N-k)}),
\ee
where the sum goes over all permutations of the parties and $k$ is the number of excitations.
For instance, for $k=1$ they reproduce the $N$-qubit $W$ state:
\begin{equation}
|W_N\rangle=\frac{1}{\sqrt{N}}(\ket{0\ldots 01}+\ket{0\ldots 10}+\ldots+\ket{10\ldots 0}).
\end{equation}
Interestingly, Dicke states have been generated experimentally \cite{Vienna} and have important role in metrology tasks \cite{MetroDicke1} and quantum networking protocols \cite{Chiuri}. 

We now show how to self-test Dicke states. For convenience, we 
consider the unitarily equivalent state $|xD_N^k\rangle=\sigma_x^{N}|D_N^k\rangle$ with 
$\sigma_x^N$ denoting $\sigma_x$ applied to party $N$. Our self-test
exploits the fact that every Dicke state can be written as
\begin{equation}
|xD_{N}^k\rangle=
\frac{1}{\sqrt{N}}\left(\sqrt{N-m}\,\ket{0}|xD_{N-1}^k\rangle
+\sqrt{m}\,\ket{1}|xD_{N-1}^{k-1}\rangle\right)
\end{equation} 
which, after recursive application, allows one to express it in terms of the $(k+1)$-partite $W$ state, that is,
\begin{equation}\label{Dickedec}
|xD_{N}^k\rangle=\sum_{i_1,\ldots,i_{N-k-1}=0}^1\frac{{{k+1 \choose k-\Omega}}^{\frac{1}{2}}}{{{N\choose k}}^{\frac{1}{2}}}\ket{i_1,\ldots,i_{N-k-1}}|xD_{k+1}^{k-\Omega}\rangle,
\end{equation}
where the first ket is shared by the parties $1,\ldots,N-k-1$ and $\Omega=i_1+\ldots+i_{N-k-1}$. Now, for $i_1=\ldots=i_{N-k-1}=0$, the corresponding state $|xD_{k+1}^k\rangle$ is simply a rotated $(k+1)$-partite $W$ state $\sigma_x^{\otimes (k+1)}|xW_{k+1}\rangle$. Moreover, due to the fact that the Dicke states are symmetric, the above decomposition holds for any choice of $N-k-1$ parties among the first $N-1$ parties. Thus, if we had a self-test for the $N$-partite $W$ state $|xW_N\rangle$, we could use the above formula to generalize it to any Dicke state. Let us then show how to self-test any $W$ state.

\begin{theorem}\label{Thmw}
Let the state $\ket{\psi}$ and measurements
$Z_i,X_i$ for parties $i=1,\ldots,N-1$ and $D_N$ and $E_N$ for the last party, satisfy the following conditions:
\begin{equation}\label{Thmwcond1}
\left\langle \bigotimes_{l=1,l\neq i}^{N-1}Z_l^{(+)}\right\rangle=\frac{2}{N},\qquad
\left\langle \bigotimes_{l=1,l\neq i}^{N-1}Z_l^{(+)}\otimes B_{i,N}^{(+)}\right\rangle=\frac{4\sqrt{2}}{N},
\end{equation}
with $i=1,\ldots,N-1$, where, as before, $B_{i,N}^{(+)}=Z_i \otimes D_N+Z_i\otimes E_N+X_i\otimes D_N-X_i\otimes E_N$ is the Bell operator between the parties 
$i$ and $N$.
%
Moreover, we assume that 
\begin{equation}\label{Thmwcond2}
\langle Z_i^{(-)}\rangle=\frac{1}{N},\qquad
\left\langle \bigotimes_{l=1,l\neq i}^{N-1}Z_i^{(+)}\otimes Z_i^{(-)}\right\rangle=\frac{1}{N}
\end{equation}
with $i=1,\ldots,N-1$.
Then, for the isometry $\Phi_N$ one has
%
$\Phi_N(\ket{\psi})=\ket{\mathrm{aux}}\ket{xW_N}.$
\end{theorem}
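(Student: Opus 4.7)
The plan is to run $N-1$ bipartite CHSH self-tests -- one for each pair $(i,N)$, $i=1,\ldots,N-1$ -- by projecting the other $N-2$ parties onto the $+1$ eigenspaces of their $Z_l$ observables, and then to stitch them together through a single local SWAP isometry. Setting $\pi_i:=\bigotimes_{l\neq i,\,l\leq N-1}Z_l^{(+)}$, condition \eqref{Thmwcond1} states that $\bra{\psi}\pi_i\ps=2/N$ and $\bra{\psi}\pi_i B_{i,N}\ps=4\sqrt{2}/N$, so the normalized post-projection state saturates Tsirelson's bound. Applying Lemma~\ref{LemmaTilt} with $\theta=\pi/4$ (equivalently $\alpha=0$) to each such sub-test yields the standard CHSH rigidity relations
\begin{equation}
Z_i\pi_i\ps=Z_N\pi_i\ps,\qquad X_i\pi_i\ps=X_N\pi_i\ps,\qquad \{X_N,Z_N\}\pi_i\ps=\{X_i,Z_i\}\pi_i\ps=0,
\end{equation}
with $Z_N,X_N$ extracted from $D_N,E_N$ exactly as in Lemma~\ref{LemmaTilt}. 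The crucial feature is that $Z_N,X_N$ are the \emph{same} operators across all $N-1$ sub-tests because the observables $D_N,E_N$ are; this is what will allow a single isometry to discharge all sub-tests simultaneously.

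Condition \eqref{Thmwcond2} contributes a ``collapse'' relation: since $\bra{\psi}Z_i^{(-)}\ps=\bra{\psi}\pi_iZ_i^{(-)}\ps=1/N$, the projector $Z_i^{(-)}(I-\pi_i)$ has zero expectation on $\ps$, so that
\begin{equation}
Z_i^{(-)}\ps=Z_i^{(-)}\pi_i\ps,\qquad i=1,\ldots,N-1.
\end{equation}
Intuitively, whenever party $i$ obtains outcome $-1$ for $Z_i$, the parties $l\neq i$, $l\leq N-1$ must have obtained $+1$. Combined with $Z_i\pi_i\ps=Z_N\pi_i\ps$ a short calculation gives $\|\bigotimes_{l=1}^N Z_l^{(+)}\ps\|^2=1/N$, which will match the weight of each basis vector occurring in $\ket{xW_N}$.

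With these ingredients in hand I would apply the local SWAP isometry $\Phi=\Phi_1\otimes\cdots\otimes\Phi_N$ of Fig.~\ref{fig:method}, each $\Phi_l$ built from $Z_l,X_l$, whose output is
\begin{equation}
\Phi(\ps)=\sum_{s\in\{0,1\}^N}\ket{s}_A\Bigl(\prod_{l:\,s_l=1}X_l\Bigr)\Bigl(\prod_{l=1}^{N}Z_l^{(s_l)}\Bigr)\ps,
\end{equation}
where $Z_l^{(0)}:=Z_l^{(+)}$ and $Z_l^{(1)}:=Z_l^{(-)}$. I then verify that the only surviving amplitudes are those for $s=0^N$ and for $s=(0,\ldots,1_k,\ldots,0,1_N)$, $k=1,\ldots,N-1$, each equal to $\bigotimes_l Z_l^{(+)}\ps$. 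The vanishing of ``bad'' amplitudes is routine: strings with two or more $1$'s among positions $\leq N-1$ are killed by inserting $Z_k^{(-)}\ps=Z_k^{(-)}\pi_k\ps$, because $\pi_k$ contains a $Z_{k'}^{(+)}$ factor orthogonal to the $Z_{k'}^{(-)}$ already present; strings with a single $1$ at position $k\leq N-1$ and $s_N=0$ (as well as the string $s=(0,\ldots,0,1_N)$) are killed by replacing $Z_k^{(\pm)}\pi_k\ps$ with $Z_N^{(\pm)}\pi_k\ps$ via CHSH rigidity and contracting with the orthogonal $Z_N^{(\mp)}$ factor.

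The heart of the calculation is the treatment of the surviving ``good'' amplitudes: for $s=(0,\ldots,1_k,\ldots,0,1_N)$ the chain
\begin{align*}
X_kX_NZ_k^{(-)}Z_N^{(-)}\pi_k\ps
&= X_NZ_N^{(-)}\,X_kZ_k^{(-)}\pi_k\ps \\
&= X_NZ_N^{(-)}\,Z_k^{(+)}X_k\pi_k\ps \\
&= Z_k^{(+)}X_NZ_N^{(-)}X_N\pi_k\ps \\
&= Z_k^{(+)}Z_N^{(+)}\pi_k\ps \\
&= \bigotimes_l Z_l^{(+)}\ps
\end{align*}
uses, in order, commutativity of operators on different parties, anticommutation of $X_k,Z_k$ on $\pi_k\ps$, the identity $X_k\pi_k\ps=X_N\pi_k\ps$, and anticommutation of $X_N,Z_N$ on $\pi_k\ps$. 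Assembling all surviving summands gives $\Phi(\ps)=\sqrt{N}\bigotimes_l Z_l^{(+)}\ps\otimes\ket{xW_N}_A=\ket{\mathrm{junk}}\otimes\ket{xW_N}$ with $\ket{\mathrm{junk}}:=\sqrt{N}\bigotimes_l Z_l^{(+)}\ps$ of unit norm. The main obstacle is exactly this final consolidation: the $N-1$ sub-tests each only constrain the state on their own subspace $\pi_k\ps$, yet the amplitudes produced by the SWAP at $k=1,\ldots,N-1$ must all coincide with the \emph{same} junk state. This reconciliation is possible precisely because party $N$ uses the same observables $D_N,E_N$ in every sub-test, making $Z_N,X_N$ a common ``hub'' through which the amplitudes on different $\pi_k\ps$ can be identified.
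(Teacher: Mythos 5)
Your proposal is correct and follows essentially the same route as the paper's Appendix E: project the other $N-2$ parties onto $Z_l^{(+)}$, invoke CHSH rigidity for each pair $(i,N)$, and evaluate the SWAP isometry term by term, with the surviving amplitudes all collapsing onto $\bigotimes_l Z_l^{(+)}\ket{\psi}$. The only (cosmetic) divergence is that you extract the operator identity $Z_i^{(-)}\ket{\psi}=Z_i^{(-)}\pi_i\ket{\psi}$ directly from the vanishing expectation of the projector $Z_i^{(-)}(I-\pi_i)$, whereas the paper sandwiches $\langle Z_i^{(-)}Z_j^{(+)}\rangle$ between the two conditions of \eqref{Thmwcond2} to conclude $Z_i^{(-)}Z_j^{(-)}\ket{\psi}=0$; the two routes are equivalent and yours is arguably slightly cleaner.
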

We defer the detailed proof to \ref{App:W}, presenting here only a sketch. The proof makes use of the fact that $\ket{xW_N}$ can be written as $[\ket{0}^{\otimes N-2} (\ket{00}+\ket{11})_{i,N}+\ket{\mathrm{rest}_i}]/\sqrt{N}$, where $(\ket{00}+\ket{11})_{i,N}$ is the maximally entangled state between the parties $i$ and $N$, and the state $\ket{\mathrm{rest}_i}$ collects all the remaining kets. We thus impose in  (\ref{Thmwcond1}) that if $(N-2)$-partite subset of the first 
$N-1$ parties obtains $+1$ when measuring $Z_i$ on $\ket{\psi}$, the state held by the parties $i$ and $N$ violates maximally the CHSH Bell inequality. Conditions in (\ref{Thmwcond2}) are needed to characterize $\ket{\mathrm{rest}_i}$, which completes the proof.

Let us now demonstrate how the above result can be applied to self-test any Dicke state. First, let us simplify our considerations by noting that a Dicke state with $k\leq \lfloor N/2\rfloor$ is unitarily equivalent to a Dicke state with $m\geq \lceil N/2\rceil$, i.e., $|D_{N}^k\rangle=\sigma_z^{\otimes N}|D_{N}^{N-k}\rangle$ for $k=0,\ldots,\lfloor N/2\rfloor$. Thus, it is enough to consider the Dicke states with $k\geq \lfloor N/2\rfloor$. Second, due to the fact that Theorem \ref{Thmw} is formulated for $|xW_N\rangle$, while in the decomposition (\ref{Dickedec}) we have $\sigma_{x}^{\otimes (k+1)}|xW_N\rangle$, one has to modify the conditions in (\ref{Thmwcond1}) and (\ref{Thmwcond2}) as $Z_{i}^{(+)}\leftrightarrow Z_i^{(-)}$ for $i=1,\ldots,N-1$, and $D_N\to -E_N$ and $E_N\to -D_N$. 

Then, to self-test the Dicke states one proceeds in the following way:
\begin{enumerate}
\item Project any $(N-k-1)$-element subset $\mathcal{S}_i$ of the first $N-1$ parties of $\ket{\psi}$
(there are $\binom{N-1}{N-1-k}$ 
such subsets)
onto $\bigotimes_{j\in \mathcal{S}_i} Z_{j}^{(+)}$
and check whether the state corresponding to the remaining parties satisfies
the conditions for $|xD_{k+1}^k\rangle=\sigma^{\otimes (k+1)}_x|xW_{k+1}\rangle$.

\item For every sequence $(i_1,\ldots,i_N)$
consisting of $k+1$ ones on the first $N-1$ positions,  check that the state $\ket{\psi}$ obeys the following correlations 
\begin{equation}\label{VinaBujanda}
\langle\psi|Z_1^{(i_1)}\otimes\ldots\otimes Z_N^{(i_N)}|
\psi\rangle=0,
\end{equation}
where $Z_i^{(\tau_i)} = \left[\mathbbm{1} + (-1)^{\tau_i}Z_i\right]/2$.
\end{enumerate}
The detailed proof that the above procedure allows to self-test the Dicke states
is presented in \ref{App:Dicke}.

Notice that our self-test exploits two observables per site and the total number of correlators one has to determine for every Dicke state in this procedure again scales linearly with $N$, in contrast with the exponential scaling of quantum state tomography.

\subsection{Graph states} 
\label{subsec:graph}

We finally demonstrate that our method applies also to the graph states. These are $N$-qubit quantum states that have been widely exploited in quantum information processing, in particular in quantum computing, error correction, and secret sharing (see, e.g., Ref. \cite{GraphReview}). It is thus an interesting question to design efficient methods of their certification, in particular self-testing. Such a method was proposed in Ref. \cite{McKague} however, in general it needs three measurements for at least one party. Below we show that the approach based on violation of the CHSH Bell inequality provides a small improvement, as it requires only two measurements at each site.

Before stating our result, we introduce some notation. Consider a graph $G=(V,E)$ with $V$ and $E$ denoting respectively the $N$-element set of vertices of $G$ and the set of edges connecting elements of $V$. A graph state corresponding to $G$ is an $N$-qubit state given by $\ket{\psi_G}=\prod_{(a,b)\in E}U_{a,b}\ket{+}^{\otimes N}$, where $U_{a,b}$ is the controlled-$Z$ interaction between qubits $a$ and $b$, the product goes over all edges of $G$, and $\ket{+}=(\ket{0}+\ket{1})/\sqrt{2}$. Notice that $\ket{\psi_G}$ can also be written as 
\be\label{eq:GraphState}
\ket{\psi_G} = \frac{1}{\sqrt{2^N}}\sum_{\mathbf{i} \in \{0,1\}^n}(-1)^{\mu(\mathbf{i})}\ket{\mathbf{i}},
\ee 
where the sum is over all sequences $\mathbf{i}=(i_1,\ldots,i_N)$ with each $i_j\in\{0,1\}$, and 
$\mu(\mathbf{i})$ is the number of edges connecting qubits being in the state $\ket{1}$ for a given ket $\ket{\mathbf{i}}$. 

The main property of the graph states underlying our self-test is that by measuring all the neighbours of a pair of connected qubits $i,j$ in the $\sigma_z$-basis, the two qubits $i$ and $j$ are left in one of the Bell states (cf. Ref. \cite{HeinPRA}):
\begin{equation}\label{stanyBella}
\frac{1}{\sqrt{2}}(\sigma_z^{m_i}\otimes \sigma_{z}^{m_j})
(\ket{0+}+\ket{1-})
\end{equation}
where $m_i$ is the number of parties from $\nu_{i,j} \setminus \{j\}$ whose result of the measurement in the $\sigma_z$-basis was $-1$. In (\ref{stanyBella}) we neglect an unimportant $-1$ factor that might appear.

Having all this, we can now state formally our result. Given a graph $G$ and the corresponding graph state $\ket{\psi_G}$, let $\nu_i$ denote the set of of all neighbours of the qubit $i$ 
(all qubits connected to $i$ by an edge). Likewise, we denote by $\nu_{i,j}$ the set of neighbours of qubits $i$ and $j$. Let then $|\nu_i|$ and $|\nu_{i,j}|$ be the numbers of elements of $\nu_i$ and $\nu_{i,j}$, respectively. Also, for simplicity, we label 
the qubits of $\ket{\psi_G}$ in such a way that the qubits $N-1$ and $N$ are connected and
the qubit $N$ is the one with the smallest number of neighbours. Denoting $Z^{(\tau)}_{\nu_{i,j}}=\otimes_{l\in\nu_{i,j}}Z_l^{(\tau_l)}$,
where $\tau$ is an $|\nu_{i,j}|$-element sequence with each $\tau_l\in \{0,1\}$ (the operator $Z^{(\tau)}_{\nu_{i,j}}$ acts only on the parties belonging to $\nu_{i,j}$), we can state our result.

\begin{theorem}\label{theorem3}
Let $\ket{\psi}$ and measurements $Z_i,X_i$ with $i=1,\ldots,N-1$
and $D_N,E_N, Z_N \equiv (D_N-E_N)/\sqrt{2},_N \equiv (D_N+E_N)/\sqrt{2}$,  satisfy $\langle Z^{(\tau)}_{\nu_{N-1,N}}\rangle=1/2^{|\nu_{N-1,N}|}$, and
\begin{equation}\label{Graphcond1}
\left\langle Z^{(\tau)}_{\nu_{N-1,N}}\otimes B_{N-1,N}^{(m_{N-1},m_N)}\right\rangle=\frac{2\sqrt{2}}{2^{|\nu_{N-1,N}|}}
\end{equation} 
for every choice of the $|\nu_{i,j}|$-element sequence $\tau$. The Bell operators $B_{N-1,N}^{(m_{N-1},m_{N})}$ are defined as
%
$B_{N-1,N}^{(m_{N-1},m_{N})}=(-1)^{m_N} X_{N-1}\otimes (D_N+E_N)+(-1)^{m_{N-1}}Z_{N-1}\otimes (D_N-E_N).$
%
Additionally, we assume that 
\begin{eqnarray}\label{GraphCond2}
\left\langle Z^{(\tau)}_{\nu_{i,j}}\right\rangle=\frac{1}{2^{|\nu_{i,j}|}},\quad
\left\langle Z^{(\tau)}_{\nu_{i,j}}\otimes Z_i\otimes X_j\right\rangle=\frac{(-1)^{m_{j}}}{2^{|\nu_{i,j}|}}&&
\end{eqnarray}
for all connected pairs of indices $i\neq j$ except for $(N-1,N)$. Then $\Phi_N(\ket{\psi})=\ket{\psi_G}$. 
%
\end{theorem}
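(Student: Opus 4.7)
The strategy follows the architecture of Theorems~\ref{thm-PEGHZ} and \ref{Thmw}. First, pair-wise SWAP-isometry identities are extracted from the assumed correlations: the edge $(N-1,N)$ is handled via CHSH, while every other edge is handled by a weaker eigenvector argument. Then, the identities for all edges are combined into a global SWAP isometry $\Phi_N=\bigotimes_{i=1}^N\Phi_i$, each local block being the standard circuit of Fig.~\ref{fig:method} built from $(Z_i,X_i)$ for $i<N$ and from $(Z_N,X_N)$ as defined in the theorem statement. The final step is to verify that the ancilla register of $\Phi_N$ is left in the state $\ket{\psi_G}$ of \eqref{eq:GraphState}.

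For the distinguished edge, fix outcomes $\tau$ on $\nu_{N-1,N}$. The marginal constraint $\langle Z^{(\tau)}_{\nu_{N-1,N}}\rangle=2^{-|\nu_{N-1,N}|}$ normalizes the branch, and together with \eqref{Graphcond1} it forces the normalized branch state $\ket{\psi_\tau}\propto Z^{(\tau)}_{\nu_{N-1,N}}\ket{\psi}$ to saturate the Tsirelson bound $2\sqrt{2}$ of the CHSH inequality formed by the observables $A_0=(-1)^{m_N}X_{N-1}$, $A_1=(-1)^{m_{N-1}}Z_{N-1}$, $B_0=D_N$, $B_1=E_N$. Lemma~\ref{LemmaTilt} at $\alpha=0$ then yields the per-branch identities $X_{N-1}\ket{\psi_\tau}=(-1)^{m_N}X_N\ket{\psi_\tau}$ and $Z_{N-1}\ket{\psi_\tau}=(-1)^{m_{N-1}}Z_N\ket{\psi_\tau}$. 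Multiplying by $Z^{(\tau)}_{\nu_{N-1,N}}$ and summing over $\tau$ lifts these to global identities on $\ket{\psi}$, which once pushed through the ancilla register of $\Phi_N$ encode the graph-state stabilizer equations $K_{N-1}\ket{\psi_G}=K_N\ket{\psi_G}=\ket{\psi_G}$, where $K_v=X_v\prod_{\ell\in\nu_v}Z_\ell$.

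For every other connected pair $(i,j)$, the correlator condition \eqref{GraphCond2}, combined with the fact that $Z_i$ and $X_j$ are $\pm 1$-valued observables, forces the normalized branch state $\ket{\psi_\tau}$ to be a $(-1)^{m_j}$-eigenvector of $Z_iX_j$; hence $Z_i\ket{\psi_\tau}=(-1)^{m_j}X_j\ket{\psi_\tau}$. Summing over $\tau$ packages these into global identities that, inserted into the expansion of $\Phi_N$ together with those from the previous paragraph, imply that each graph-state stabilizer $K_v$ acts as the identity on the ancilla register. Since $\ket{\psi_G}$ is the unique joint $+1$-eigenstate of all the $K_v$, the ancillas collapse to $\ket{\psi_G}$ and the rest of $\ket{\psi}$ factors off as $\ket{\mathrm{junk}}$, analogously to the graph-state isometry analysis of \cite{McKague}.

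The main technical obstacle is the sign bookkeeping when passing from per-branch to global identities: the parity signs $(-1)^{m_\ell}$ attached to each branch $\tau$ must combine, after summation over $\tau$ and after the per-edge identities are inserted into the SWAP circuit, into the single phase $(-1)^{\mu(\mathbf{i})}$ appearing in \eqref{eq:GraphState}. This is resolved using the graph-theoretic fact that $\mu(\mathbf{i})$ counts exactly the edges with both endpoints equal to $1$ in $\mathbf{i}$, so that the stabilizer insertions pair up edge by edge; the choice to label as $N$ the qubit with the smallest neighborhood, and to apply the stronger CHSH assumption only to the edge $(N-1,N)$, is what keeps the number of measurement settings per party bounded to two.
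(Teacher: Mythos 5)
Your overall architecture matches the paper's: the CHSH sub-test on the edge $(N-1,N)$ gives the strong identities there, the remaining edges are handled by the norm/Cauchy--Schwarz "parallel vectors" argument giving $Z_i\ket{\psi^{(\tau)}_{i,j}}=(-1)^{m_j}X_j\ket{\psi^{(\tau)}_{i,j}}$, and the branch identities are lifted to $\ket{\psi}$ and fed into the SWAP isometry. The paper's endgame is an explicit term-by-term reduction of every branch of the isometry output to $(-1)^{\mu'(\mathbf{i})}\,Z_1^{(+)}\cdots Z_N^{(+)}\ket{\psi}$ followed by checking $\mu'(\mathbf{i})=\mu(\mathbf{i})$, rather than your stabilizer-uniqueness argument, but that difference is largely cosmetic.

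The genuine gap is that you never establish the on-site anticommutation relations $\{Z_i,X_i\}\ket{\psi}=0$ for the parties $i=1,\ldots,N-2$. Maximal CHSH violation yields anticommutation only at the two tested sites $N-1$ and $N$; the weak correlator conditions on the other edges give, by themselves, only the cross relations between $Z_i$ and $X_j$ on the projected states and say nothing about $\{Z_i,X_i\}$. Without anticommutation at every site you cannot commute $X_i$ past $Z_i^{(-)}$ inside the isometry expansion (equivalently, you cannot show that the ancilla register is stabilized by every $K_v$), so neither your stabilizer argument nor the paper's explicit computation goes through. The paper closes this hole with a propagation argument: for a vertex $N-2$ adjacent to $N-1$ it uses the cross relations in \emph{both} orientations, $X_{N-2}\leftrightarrow Z_{N-1}$ and $Z_{N-2}\leftrightarrow X_{N-1}$, together with the already-established $\{X_{N-1},Z_{N-1}\}\ket{\psi}=0$, to derive $\{X_{N-2},Z_{N-2}\}\ket{\psi}=0$ through a short chain of equalities, and then repeats this recursively along the (connected) graph to reach every vertex. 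This propagation --- not the sign bookkeeping you single out as the main obstacle --- is precisely what allows the construction to get away with a single CHSH test and two settings per party; your proof needs to add it.
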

The proof of this statement may be found in \ref{App:Graph}. It is worth noting that the above approach exploits violations of the CHSH Bell inequality between a single pair of parties [cf. (\ref{Graphcond1})], and not between every pair of neighbours.

\section{Conclusion and discussion}\label{sec:conc}

We investigated a simple, but potentially general, approach to self-testing multipartite states, inspired by \cite{Singapur}, which relies on the well understood method of self-testing bipartite qubit states based on the maximal violation of the tilted CHSH Bell inequality. This approach allows one to self-test, with few measurements, all permutationally-invariant Dicke states, all partially entangled GHZ qubit states, and to recover self-testing of graph states (which was previously known through stabilizer-state methods). In our work, we also generalize self-testing of partially entangled GHZ qubit states to the qudit case, using techniques from \cite{Coladangelo}. We obtain the first self-testing result for a class of multipartite qudit states, by showing that all multipartite qudit states that admit a Schmidt decomposition can be self-tested. Importantly, our self-tests have a low complexity in terms of resources as they require up to four measurement choices per party, and the total number of correlators that one needs to determine scales linearly with the number of parties. {\ivan{This is because, in contrast with tomographic methods, one does not need to check correlators between all possible measurements of each party, but a smaller number of correlators already imposes a rigid structure on the state.}} \\ 

As a direction for future work, we are particularly interested in extending this approach to self-test any generic multipartite entangled state of qubits (which is local-unitary equivalent to its complex conjugate in any basis). The main challenge here is to provide a general recipe to construct a single isometry that self-tests the global state from the different ones derived from various subtests (i.e. from projecting various subsets of parties and looking at the correlations of the remaining ones). This appears to be challenging for states that do not have any particular symmetry.

In this paper we have not made any estimations regarding robustness of the presented self-testing protocols. The standard methods based on norm inequalities (see \cite{MYS,McKague,Cedric,SASA}) can be applied to obtain robustness bounds. However, calculating them is a tedious task which for states of many particles generally does not lead to practically useful bounds. Methods for finding better bounds, such as those presented in \cite{swap,jed}, are not easily applicable to multipatite states. In general, finding methods to estimate robustness of self-testing of multipartite states remains as one of the open questions.

Finally, notice that all presented self-tests which rely on the maximal violation of the CHSH Bell inequality can be restated and proved in terms of the other available self-tests. In particular, any self-test discussed in \cite{Wang} would work in case of two measurements per site, and self-tests in \cite{SASA} would work for higher number of inputs.\\

\textit{Note added:} After finishing this work, we learned about works \cite{WuPhD} and \cite{Fad}, where the authors obtain self-testing of $N$-partite W-states and Dicke states, respectively.  \\

\ack The authors thank Flavio Baccari, Marc Roda, Alexia Salavrakos and Thomas Vidick for useful discussions. This work was supported by Spanish MINECO (QIBEQI FIS2016-80773-P and Severo Ochoa SEV-2015-0522), the AXA Chair in Quantum Information Science, Generalitat de Catalunya (CERCA Programme), Fundaci\'{o} Privada Cellex and ERC CoG QITBOX. This project has received funding from the European Union's Horizon 2020 research and innovation programme under the Marie Sk\l{}odowska-Curie grant agreement No 705109. I. \v{S}. acknowledges the support of "Obra Social La Caixa 2016". A.C. is supported by AFOSR YIP award number FA9550-16-1-0495.
R. A. acknowledges the support from the Foundation for Polish Science 
through the First Team project No First TEAM/2017-4/31 co-financed by the European Union under
the European Regional Development Fund.

\appendix

%

%
%

\section{Proof of Theorem \ref{thm-PEGHZ}}
\label{appendix_C}

For ease of exposition, we prove the Theorem in the case $N=4$, with the extension to general $N$ being immediate. 

Let $A_0,A_1,B_0,B_1,C_0,C_1,D_0,D_1$, be the pairs of observables for the four parties. For an observable $D$, let $P_D^{a} = [\mathds{1} +(-1)^a D]/2$, and for brevity let $c_{\theta}$ and $s_{\theta}$ denote respectively $\cos{\theta}$ and $\sin{\theta}$. For clarity, we recall the correlations from Theorem \ref{thm-PEGHZ}, for the case $N = 4$:
%
\begin{eqnarray} \label{prva}
\fl \hspace{-1cm}\bra{\psi}P_{A_0}^0\ket{\psi} = \bra{\psi}P_{B_0}^0\ket{\psi} = \bra{\psi}P_{C_0}^0\ket{\psi} = \bra{\psi}P_{A_0}^0P_{C_0}^0\ket{\psi} =  \bra{\psi}P_{B_0}^0P_{C_0}^0\ket{\psi} = \Ct, \\ \label{druga}
\fl \hspace{-1cm}\bra{\psi}P_{A_1}^a P_{B_1}^b \ket{\psi} = \frac{1}{4}, \qquad \mbox{for $a,b \in {0,1}$}
\end{eqnarray}
\begin{equation}
\fl \hspace{-1cm} \bra{\psi}P_{A_1}^a P_{B_1}^b \big(\alpha C_0 + C_0D_0 + C_0D_1 + (-1)^{a+b}(C_1D_0 - C_1D_1)\big)\ket{\psi} = \frac{\sqrt{8+2\alpha^2}}{4},\quad \mbox{for $a,b \in {0,1}$} \label{treca}
\end{equation}
%

where $\tan{2\theta} = (\frac{2}{\alpha^2}-\frac{1}{2})^{\frac{1}{2}}$. Equations (\ref{prva}) imply, by Cauchy-Schwartz inequality, that
\begin{equation}
P_{A_0}^0\ket{\psi} = P_{B_0}^0\ket{\psi} = P_{C_0}^0\ket{\psi}
\end{equation}
and consequently
\begin{equation}
P_{A_0}^1\ket{\psi} = P_{B_0}^1\ket{\psi} = P_{C_0}^1\ket{\psi}.
\end{equation}

Notice that equation (\ref{druga}) implies $\|P_{A_1}^aP_{B_1}^b\ket{\psi}\| = 1/2$, for $a,b \in \{0,1\}$, and that the equations in (\ref{treca}) describe maximal violations of tilted CHSH inequalities by the normalized state $2P_{A_1}^aP_{B_1}^b\ket{\psi}$, for $a,b \in \{0,1\}$ (the ones for $a\oplus b =1 $ are tilted CHSH inequalities upon relabelling $D_1 \rightarrow -D_1$).

Let $\mu$ be such that $\tan{\mu} = s_{2\theta}$. Define $X_A := A_1, X_B := B_1$ and $X_C := C_1$. Then, let $Z_D' = (D_0+D_1)/2\cos{\mu}$, and let $Z_D^*$ be $Z_D'$ where we have replaced the zero eigenvalues with $1$. Define $Z_D = Z_D^*|Z_D^*|^{-1}$. Define $X_D$ similarly starting from $X_D' =(D_0-D_1)/2\cos{\mu}$. Let $P_{Z_D}^a = [\mathds{1}+ (-1)^a Z_D]/2$. The maximal violations of tilted CHSH from (\ref{treca}) imply, thanks to Lemma \ref{LemmaTilt}, that
\begin{eqnarray}\label{cetvrta}
P_{C_0}^a = P_{Z_D}^a, \,\,\,\,\,\,\,\, \mbox{for $a\in \{0,1\}$},\\
\st P_{A_1}^a P_{B_1}^bX_CX_DP_{C_0}^0\ps  = (-1)^{a+b}\ct P_{A_1}^aP_{B_1}^b P_{C_0}^1\ps, \,\,\, \mbox{for $a,b\in \{0,1\}$}. \label{c5}
\end{eqnarray}
If we introduce notation $X_A = A_1, X_B = B_1$ and $X_C = C_1$, then
\ba \nonumber
\fl X_AX_BX_CX_D P_{A_0}^{1}\ps &=& (P_{A_1}^0-P_{A_1}^1)(P_{B_1}^0-P_{B_1}^1)X_CX_DP_{C_0}^1\ps\\ \nonumber
&=& P_{A_1}^0P_{B_1}^0X_CX_DP_{C_0}^1\ps - P_{A_1}^0P_{B_1}^1X_CX_DP_{C_0}^1\ps - P_{A_1}^1P_{B_1}^0X_CX_DP_{C_0}^1\ps\nonumber\\
&& + P_{A_1}^1P_{B_1}^1X_CX_DP_{C_0}^1\ps\\ \nonumber
&=& \frac{\st}{\ct} P_{A_1}^0P_{B_1}^0P_{A_0}^0\ps + \frac{\st}{\ct} P_{A_1}^0P_{B_1}^1P_{A_0}^0\ps + \frac{\st}{\ct} P_{A_1}^0P_{B_1}^1P_{A_0}^0\ps + \frac{\st}{\ct} P_{A_1}^1P_{B_1}^1P_{A_0}^0\ps  \\ 
&=& \frac{\st}{\ct}P_{A_0}^0\ps,
\ea
where we used equation (\ref{c5}) to obtain the third line, and $\sum_{a,b \in \{0,1\}} P_{A_1}^aP_{B_1}^b = \mathds{1}$ to obtain the last. Conditions (\ref{eq12}) and (\ref{eq13}) of Theorem \ref{thm-PEGHZ} follow immediately from the above. Note that we omitted the proof that $Z_D$ and $X_D$ act on $\ket{\psi}$ in the same way as $Z'_D$ and $X'_D$, respectively. The proof relies on the standard self-testing procedure named regularization. For a detailed proof see \cite{Cedric} or \cite{SASA}.

\section{Proof of Lemma \ref{yngen}}
In this section, we provide a proof of Lemma \ref{yngen}. We explicitly construct a local isometry $\Phi$ such that $\Phi(\ket{\psi})=\ket{aux}\otimes\ket{\Psi}$ for any Schmidt state $\ket{\Psi} = \sum_{j=0}^{d-1} c_j \ket{j}^{\otimes N}$, where $0< c_j < 1$ for all $j$ and $\sum_{j=0}^{d-1}c_j^2 = 1$, and $\ket{aux}$ is some auxiliary state. 

\begin{proof}
Recall that $\{P_l^{(k)}\}_{k=0}^{d-1}$ are complete sets of orthogonal projections for $l = 1,\ldots,N-1$ by hypothesis. Then, notice that for $i \neq j$ we have, using condition (\ref{c2}), $P_N^{(i)}P_N^{(j)} \ps = P_N^{(i)}P_1^{(j)} \ps =P_1^{(j)}P_1^{(i)} \ps = 0$, i.e., the $P_N^{(k)}$ are ``orthogonal when acting on $\ps$''. 

Let $\mathcal{A}$ be the unital algebra generated by $\{P_1^{(k)}\}$. Let $\mathcal{H}' = \mathcal{A} \ps$, where $\mathcal{A} \ps = \{Q\ps : Q \in \mathcal{A}\}$. Let $\tilde{P}_N^{(k)} = P_N^{(k)}|_{\mathcal{H}'}$ be the restriction of $P_N^{(k)}$ to $\mathcal{H}'$. Then, $\{\tilde{P}_N^{(k)}\}_{k=0}^{d-1}$ is a set of orthogonal projections. This is because, thanks to (\ref{c2}), one can always move the relevant operators to be in front of $\ps$, as in the simple example 
\begin{equation}
\tilde{P}_N^{(i)}\tilde{P}_N^{(j)} (P_1^{(k)}\ps') = P_1^{(k)}\tilde{P}_N^{(i)}\tilde{P}_N^{(j)}  \ps = 0.
\end{equation}

Thus, the set $\{\tilde{P}_B^{(k)}, I-P'_B\}$, where $P'_B$ is the sum of all other projections, is a complete set of orthogonal projections.

Now, define $Z_{l} := \sum_{k=0}^{d-1} \omega^k P_l^{(k)}$, for $l=1,\ldots,N-1$, and $Z_{N} := \sum_{k=0}^{d-1} \omega^k \tilde{P}_{N}^{(k)} + \mathds{1} - \sum_{k=0}^{d-1} \tilde{P}_{N}^{(k)}$.
In particular, the $Z_l$ are all unitary. Notice, moreover, that $\big(\mathds{1} - \sum_k \tilde{P}_N^{(k)}\big) \ps = 0$, by using (\ref{c2}) and the fact that the $\{P_l^{(k)}\}$ are complete. 

Define the local isometry
\begin{equation}
\Phi := \bigotimes_{l=1}^{N} R_{ll'} \bar{F}_{l'}S_{ll'}F_{l'} \mbox{App}_l,
\end{equation}
where $\mbox{App}_l: \mathcal{H}_l \rightarrow \mathcal{H}_l \otimes \mathcal{H}_{l'}$ is the isometry that simply appends $\ket{0}_l'$, $F$ is the quantum Fourier transform, $\bar{F}$ is the inverse quantum Fourier transform, $R_{ll'}$ is defined so that $\ket{\phi}_{l} \ket{k}_{l'} \mapsto X^{(k)}_{l}\ket{\phi}_{l} \ket{k}_{l'}\,\,\, \forall \ket{\phi}$, and $S_{ll'}$ is defined so that $\ket{\phi}_{l} \ket{k}_{l'} \mapsto Z^{k}_{l}\ket{\phi}_{l}\ket{k}_{l'} \,\,\, \forall \ket{\phi}$. We compute the action of $\Phi$ on $\ket{\psi}$. For ease of notation with drop the tildes from the $\tilde{P}_N^{(k)}$, while still referring to the new orthogonal projections. 
\begin{eqnarray} \nonumber
\fl \ket{\psi}\otimes \ket{0}^{\otimes N} \stackrel{\bigotimes_l F_{l'}}{\longrightarrow} \frac{1}{d^{N/2}}\sum_{k_1,\ldots,k_N}\ket{\psi}\otimes \bigotimes_l \ket{k_l}_{l'} \\
\stackrel{\bigotimes_l S_{ll'}}{\longrightarrow} \frac{1}{d^{N/2}}\sum_{k_1,\ldots,k_N} \left[\prod_{i=1}^{N-1}\left(\sum_{j_i}\omega^{j_i}P^{(j_i)}_{i}\right)^{k_i}\right]\left(\sum_{j_N}\omega^{j_N}P^{(j_N)}_{N} + \mathds{1} - \sum_k P_N^{(j_N)}\right)^{k_N}\ket{\psi}\otimes \bigotimes_l \ket{k_l}_{l'} \nonumber\\
= \frac{1}{d^{N/2}}\sum_{k_1,\ldots,k_N}\sum_{j_1,\ldots,j_N}\prod_{i=1}^{N}\omega^{j_ik_i}P^{(j_i)}_{i}\ket{\psi}\otimes \bigotimes_l \ket{k_l}_{l'} \nonumber\\
= \frac{1}{d^{N/2}}\sum_{k_1,\ldots,k_N}\sum_{j_1,\ldots,j_N}\prod_{i=1}^{N}\omega^{j_ik_i}P^{(j_i)}_{1}\ket{\psi}\otimes \bigotimes_l \ket{k_l}_{l'} \nonumber\\
= \frac{1}{d^{N/2}}\sum_{k_1,\ldots,k_N}\sum_{j}\omega^{j(\sum_i k_i)}P^{(j)}_{1}\ket{\psi}\otimes \bigotimes_l \ket{k_l}_{l'} \nonumber\\
\stackrel{\bigotimes_l \bar{F}_{l'}}{\longrightarrow}\frac{1}{d^N}\sum_{k_1,\ldots,k_N}\sum_{j}\sum_{m_1,\ldots,m_N}\omega^{j(\sum_i k_i)}\prod_r\omega^{-m_rk_r}P^{(j)}_{1}\ket{\psi}\otimes \bigotimes_l \ket{m_l}_{l'}\nonumber \\
=\frac{1}{d^N}\sum_{k_1,\ldots,k_N}\sum_{j}\sum_{m_1,\ldots,m_N}\prod_i \omega^{k_i(j-m_i)}P^{(j)}_{1}\ket{\psi}\otimes \bigotimes_l \ket{m_l}_{l'}\nonumber \\
= \sum_{j}P^{(j)}_{1}\ket{\psi}\otimes \ket{j}^{\otimes N}\label{A12}\\
\stackrel{\bigotimes_l R_{ll'}}{\longrightarrow} \sum_{j}\left(\prod_iX^{(j)}_{i}\right)P^{(j)}_{1}\ket{\psi}\otimes \ket{j}^{\otimes N} \nonumber \\
= \sum_{j}\frac{c_j}{c_0} P^{(0)}_{1}\ket{\psi}\otimes \ket{j}^{\otimes N} \label{E12}\\
= \frac{1}{c_0}P^{(0)}_{1}\ket{\psi} \otimes \sum_{j} c_j \ket{j}^{\otimes N} \nonumber\\
= \ket{\textrm{aux}} \otimes \ket{\Psi}, \nonumber
\end{eqnarray}
\noindent where to get (\ref{E12}) we used condition (\ref{c2}). 
It is an easy check to see that the whole proof above can be repeated by starting from a mixed joint state, yielding a corresponding version of the Lemma that holds for a general mixed state.
\end{proof}

\section{Proof of Theorem \ref{thm_schmidt}}
\label{appendix_D}
As mentioned, we work in the tripartite case, as the general $n$-partite case follows analogously. The measurements of Alice, Bob and Charlie can be assumed to be projective, since we make no assumption on the dimension of the system. For ease of notation, the proof assumes that the joint state is pure, but one easily realizes that the proof goes through in the same way by rephrasing everything in terms of density matrices (see \cite{Coladangelo} for a slightly more detailed discussion).

Let $\ps$ be the unknown joint state, and let $P_{A_x}^a$ be the projection on Alice side corresponding obtaining outcome $a$ on question $x$. Define $P_{B_y}^b$ and $P_{C_z}^c$ similarly on Bob and Charlie's side. The proof structure follows closely that of \cite{Coladangelo}, and goes through explicitly constructing projectors and unitary operators satisfying the sufficient conditions of Lemma \ref{yngen}. 

Define $\Ah_{x,m} = P^{2m}_{A_x}-P^{2m+1}_{A_x}$, $\Bh_{y,m} = P^{2m}_{B_y}-P^{2m+1}_{B_y}$ and $\Ch_{z,m} = P^{2m}_{C_z}-P^{2m+1}_{C_z}$, for $x,y,z \in \{0,1\}$. Let $\mathds{1}_{A_x}^m = P^{2m}_{A_x}+P^{2m+1}_{A_x}$ 
and similarly define $\mathds{1}_{B_y}^m$ and $\mathds{1}_{C_z}^m$ for $x,y,z \in \{0,1\}$. Now, 
\begin{eqnarray}
\|P_{A_0}^{2m}\| &=& {\bracket{\psi}{P_{A_0}^{2m}}{\psi}}^{\frac{1}{2}}\nonumber\\ 
&=& \left(\bracket{\psi}{P_{A_0}^{2m} \sum_{i=0}^{d-1}P_{B_0}^{i} \sum_{j=0}^{d-1}P_{C_0}^{j}}{\psi}\right)^{\frac{1}{2}} \nonumber\\
&=& c_{2m},
\end{eqnarray}
and $\|P_{A_0}^{2m+1}\| = c_{2m+1}$. Similarly, we derive $\| \mathds{1}_{A_x}^m \ps\| = \| \mathds{1}_{B_y}^m \ps \| = \| \mathds{1}_{C_z}^m \ps \| = (c_{2m}^2+c_{2m+1}^2)^{1/2}$ for any $m$ and $x,y,z \in \{0,1\}$. Notice then that 
\begin{eqnarray}
\bracket{\psi}{\mathds{1}_{A_x}^m\mathds{1}_{B_y}^m}{\psi} &=& \bracket{\psi}{\mathds{1}_{A_x}^m\mathds{1}_{B_y}^m \sum_{i=0}^{d-1}P_{C_0}^{i}}{\psi}\nonumber\\ 
&=& \bracket{\psi}{\mathds{1}_{A_x}^m\mathds{1}_{B_y}^m \mathds{1}_{C_0}^m}{\psi}\nonumber\\
 &=& c_{2m}^2+c_{2m+1}^2,
\end{eqnarray}
where the second last equality is from the block-diagonal structure of the correlations. Since $\| \mathds{1}_{A_x}^m \ps\| = \| \mathds{1}_{B_y}^m \ps \| = (c_{2m}^2+c_{2m+1}^2)^{1/2}$, then Cauchy-Schwartz inequality implies $\mathds{1}_{A_x}^m \ps = \mathds{1}_{B_y}^m \ps$. So, we have
\begin{equation}
\label{E1}
\mathds{1}_{A_x}^m \ps = \mathds{1}_{B_y}^m \ps = \mathds{1}_{C_z}^m \ps 
\end{equation}
for all $x,y,z \in \{0,1\}$. The correlations are, by design, such that $\Ah_{0,m}, \Ah_{1,m}, \Bh_{0,m}, \Bh_{1,m}, \Ch_{0,m}, \Ch_{0,m}$, the associated projections $P_{A_i}^{j},P_{B_i}^{j},P_{C_i}^{j}$, $j \in \{2m,2m+1\}$ and $\ps$ reproduce the correlations $(c_{2m}^2+c_{2m+1}^2) \cdot C_{x,y,z}^{\mbox{ghz}_{3,2,\theta_m}}$. In order to apply Theorem \ref{thm-PEGHZ}, we need to define the normalized state $\ket{\psi'_m} := (\mathds{1}_{A_0}^m \ps)/(c_{2m}^2+c_{2m+1}^2)^{1/2}$ and the ``unitarized'' 
versions of the operators above, namely $\hat{D}^{\mathscr{u}}_{i,m} := \mathds{1}-\mathds{1}_m^{D_i} + \hat{D}_{i,m}$, for $D \in \{A,B,C\}$. It is easy to check that then $\hat{A}^{\mathscr{u}}_{i,m}, \hat{B}^{\mathscr{u}}_{i,m}$ and $\hat{C}^{\mathscr{u}}_{i,m}$ satisfy the conditions of Theorem \ref{thm-PEGHZ} (for $N = 3$) on state $\ket{\psi'_m}$. Thus, we have 
\begin{eqnarray}
Z^{\mathscr{u}}_{A,m} \ket{\psi'_m} = Z^{\mathscr{u}}_{B,m} \ket{\psi'_m} = Z^{\mathscr{u}}_{C,m} \ket{\psi'_m}, \label{E2} \\
X^{\mathscr{u}}_{A,m} X^{\mathscr{u}}_{B,m} X^{\mathscr{u}}_{C,m} (\mathds{1} - Z^{\mathscr{u}}_{A,m}) \ket{\psi'_m} = \tan(\theta_m) (\mathds{1} + Z^{\mathscr{u}}_{A,m}) \ket{\psi'_m}. \label{E3}
\end{eqnarray}
Define the subspace $\mathcal{C}_m=\mbox{range}(\mathds{1}_m^{C_0}) + \mbox{range}(\mathds{1}_m^{C_1})$, and the projection $\mathds{1}_{\mathcal{C}_m}$ onto subspace $\mathcal{C}_m$. Then, notice from the way $Z^{\mathscr{u}}_{C,m}$ is defined, that it can be written as $Z^{\mathscr{u}}_{C,m} = \mathds{1}-\mathds{1}_{\mathcal{C}_m} + \tilde{Z}_{C,m}$, where $\tilde{Z}_{C,m}$ is some operator living entirely on subspace $\mathcal{C}_m$. This implies that $Z^{\mathscr{u}}_{C,m} \ket{\psi_m} =  \tilde{Z}_{C,m} \ket{\psi_m} = \tilde{Z}_{C,m} \ps$, where we have used (\ref{E1}) and the fact that 
\begin{eqnarray}\label{onesb}
\mathds{1}_m^{C_0} \ket{\psi} = \mathds{1}_m^{C_1} \ket{\psi}&\Longrightarrow& \mathds{1}_{\mathcal{C}_m} \ket{\psi} = \mathds{1}_m^{C_i}\ket{\psi}\,.
\end{eqnarray}
Hence, from (\ref{E2}) it is not difficult to deduce that $\hat{A}_{0,m} \ps = \hat{B}_{0,m} \ps = \tilde{Z}_{C,m} \ps$. \\

\noindent \textbf{Constructing the projections of Lemma \ref{yngen}.} Define projections $P_A^{(2m)} := (\mathds{1}_m^{A_0} + \hat{A}_{0,m})/2 = P_{A_0}^{2m}$, $P_A^{(2m+1)} := (\mathds{1}_m^{A_0} - \hat{A}_{0,m})/2 = P_{A_0}^{2m+1}$, $P_B^{(2m)} :=(\mathds{1}_m^{B_0} + \hat{B}_{0,m})/2 = P_{B_0}^{2m}$, $P_B^{(2m+1)} := (\mathds{1}_m^{B_0} - \hat{B}_{0,m})/2 = P_{B_0}^{2m+1}$,
$P_C^{(2m)} := (\mathds{1}_{\mathcal{C}_m} + \tilde{Z}_{C,m})/2$ and $P_C^{(2m+1)} := (\mathds{1}_{\mathcal{C}_m} - \tilde{Z}_{C,m})/2$. 

Note that $P_C^{(2m)},P_C^{(2m+1)}$ are indeed projections, since $\tilde{Z}_{C,m}$ has all $\pm 1$ eigenvalues corresponding to subspace $\mathcal{C}_m$, and is zero outside. We also have, for all $m$ and $k=2m,2m+1$,
\begin{eqnarray}
\label{E5}
\fl P_B^{(k)} \ps = P_A^{(k)} \ps = \frac{1}{2}[\mathds{1}_m^{A_0} +(-1)^k \hat{A}_{0,m}] \ps &=&  \frac{1}{2}[\mathds{1}_m^{B_0} +(-1)^k \hat{A}_{0,m}] \ps \\ \nonumber
&=& \frac{1}{2}[\mathds{1}_{\mathcal{B}_m} +(-1)^k \tilde{Z}_{B,m}] \ps = P_C^{(k)} \ps.
\end{eqnarray}

Further, notice that $[\mathds{1} +(-1)^k Z^{\mathscr{u}}_{A,m}] \ket{\psi'_m} =  [\mathds{1}_m^{A_0} +(-1)^k\hat{A}_{0,m}] \ket{\psi'_m} = [\mathds{1}_m^{A_0} +(-1)^k \hat{A}_{0,m}] \ps = P_A^{(k)}\ps$. Substituting this into (\ref{E3}), gives
\begin{equation}
\label{E6}
X^{\mathscr{u}}_{A,m}X^{\mathscr{u}}_{B,m} X^{\mathscr{u}}_{C,m}  P_A^{(2m+1)} \ps = \tan(\theta_m) P_A^{(2m)} \ps  = \frac{c_{2m+1}}{c_{2m}}P_A^{(2m)} \ps. \\
\end{equation}

Now, for the "shifted" blocks, we can similarly define $\Ah_{x,m}'$, $\Bh_{x,m}'$ and $\Ch_{x,m}'$ as $\Ah_{x,m} = P^{2m+1}_{A_x}-P^{2m+2}_{A_x}$ and similar. Then, analogously, we deduce the existence of hermitian and unitary operators $Y'_{A,m}$, $Y'_{B,m}$ and $Y'_{C,m}$ such that 
\begin{equation}
Y_{A,m} Y_{B,m} Y_{C,m} P_{A}^{(2m+2)} \ps =\frac{c_{2m+2}}{c_{2m+1}} P_{A}^{(2m+1)} \ps. \label{E7}
\end{equation}

\noindent \textbf{Constructing the unitary operators of Lemma \ref{yngen}.} We will now directly construct unitary operators satisfying conditions (\ref{c2},\ref{c3}) of Lemma \ref{yngen}. Define $X_{A/B/C}^{(k)}$ as follows:
\begin{equation}
X_{A}^{(k)} =  
\cases{
\mathds{1},    & \text{if } k=0,\\
X_{A,0}Y_{A,0}X_{A,1}Y_{A,1} \ldots X_{A,m-1}Y_{A,m-1}X_{A,m},   & \text{if } k=2m+1,\\
X_{A,0}Y_{A,0}X_{A,1}Y_{A,1} \ldots X_{A,m-1}Y_{A,m-1},              & \text{if } k=2m,}
\end{equation}
and analogously for $X_{B}^{(k)}$ and $X_{C}^{(k)}$.
Note that $X_{A}^{(k)}$ and $X_{B}^{(k)}$ are unitary since they are product of unitaries.
Finally, we are left to check that 
\begin{equation}
X^{(k)}_{A}X^{(k)}_{B}X^{(k)}_{C} P^{(k)}_{A}\ket{\psi} =\frac{c_k}{c_0} P^{(0)}_{A}\ket{\psi}. \label{eqfinal}
\end{equation}
The case $k=0$ holds trivially. For $k=2m+1$,
For $k=2m+1$, 
\begin{eqnarray}
\fl X^{(k)}_{A}X^{(k)}_{B}X^{(k)}_{C} P^{(k)}_{A}\ket{\psi} \nonumber \\
=X_{A,0}Y_{A,0}X_{B,0}Y_{B,0}X_{C,0}Y_{C,0}\ldots X_{A,m-1}Y_{A,m-1}X_{B,m-1}Y_{B,m-1}X_{C,m-1}Y_{C,m-1}\nonumber\\
\times X_{A,m}X_{B,m}X_{C,m} P_{A}^{(2m+1)} \ps \nonumber\\
\stackrel{\ref{E6}}{=}\frac{c_{2m+1}}{c_{2m}}X_{A,0}Y_{A,0}X_{B,0}Y_{B,0}X_{C,0}Y_{C,0}\ldots X_{A,m-1}Y_{A,m-1}X_{B,m-1}Y_{B,m-1}X_{C,m-1}Y_{C,m-1} P_{A}^{(2m)} \ps
\nonumber\\
\stackrel{\ref{E7}}{=} \frac{c_{2m+1}}{c_{2m}}\cdot \frac{c_{2m}}{c_{2m-1}}X_{A,0}Y_{A,0}X_{B,0}Y_{B,0}X_{C,0}Y_{C,0}\ldots X_{A,m-2}Y_{A,m-2}X_{B,m-2}Y_{B,m-2}\nonumber\\
\times X_{C,m-2}Y_{C,m-2} P_{A}^{(2m-1)} \ps
\nonumber\\
= \ldots  \nonumber\\
= \frac{c_{2m+1}}{\cancel{c_{2m}}} \cdot \frac{\cancel{c_{2m}}}{\cancel{c_{2m-1}}} \ldots \frac{\cancel{c_2}}{\cancel{c_1}} \cdot \frac{\cancel{c_1}}{c_0} P_{A}^{(0)} \ps \nonumber\\
= \frac{c_{2m+1}}{c_0} P_A^{(0)}\ps 
\end{eqnarray}
which is indeed (\ref{eqfinal}) as $2m+1=k$. The case $k=2m$ is similar. This concludes the proof of Theorem \ref{thm_schmidt}. 

\section{Self-testing of the W states}
\label{App:W}

In this section we provide a detailed proof of self-testing 
of the $\ket{W_N}$ state
\begin{equation}\label{Wstate}
\ket{W_N}=\frac{1}{\sqrt{N}}(\ket{0\ldots 01}+\ket{0\ldots 0 10}+\ldots+\ket{10\ldots0}).
\end{equation}
For our convenience we show how to self-test the following 
unitarily equivalent state
\begin{equation}
\ket{xW_N}=\frac{1}{\sqrt{N}}(\ket{0\ldots 0}+\ket{0\ldots 0 11}+\ldots+\ket{10\ldots01}).
\end{equation}
which is obtained from $\ket{W_N}$ by applying $\sigma_x$ to the last qubit of 
$\ket{W_N}$. This is because $\ket{xW_N}$ can be written as
\begin{equation}\label{xW2}
\ket{xW_N}=\frac{1}{\sqrt{N}}\left[\ket{0}^{\otimes N-2}(\ket{00}+\ket{11})_{i,N}+\ket{\mathrm{rest}_i}\right],
\end{equation}
where $(\ket{00}+\ket{11})_{i,N}$ stands for the two-qubit maximally entangled state distributed between the parties $i$ and $N$ with $i=1,\ldots,N-1$, and 
the vectors $\ket{\mathrm{rest}_i}$ contain the remaining kets. This decomposition explains the conditions we impose below. 

Let us now prove the following theorem. 

\begin{theorem}\label{ThmW}
Assume that for a given state $\ket{\psi}$ and measurements
$Z_i,X_i$ for parties $i=1,\ldots,N-1$ and $D_N$ and $E_N$ for the last party, the following conditions are satisfied
\begin{equation}\label{ThmWcond1}
\left\langle \bigotimes_{l=1, l\neq i}^{N-1}Z_l^{(+)}\right\rangle=\frac{2}{N},\qquad
\left\langle \bigotimes_{l=1,l\neq i}^{N-1}Z_l^{(+)}\otimes B_{i,N}^{(+)}\right\rangle=\frac{4\sqrt{2}}{N},
\end{equation}
with $i=1,\ldots,N-1$, where, as before, $B_{i,N}^{(+)}$ is the Bell operator between the parties 
$i$ and $N$ corresponding to the CHSH Bell inequality
\begin{equation}
B_{i,N}^{(+)}=Z_i \otimes D_N+Z_i\otimes E_N+X_i\otimes D_N-X_i\otimes E_N.
\end{equation} 
Moreover, we assume that 
\begin{equation}\label{ThmWcond2}
\langle Z_i^{(-)}\rangle=\frac{1}{N},\qquad
\left\langle \bigotimes_{l=1,l\neq i}^{N-1}Z_i^{(+)}\otimes Z_i^{(-)}\right\rangle=\frac{1}{N}
\end{equation}
with $i=1,\ldots,N-1$.
Then, for the isometry $\Phi_N$ one has
\begin{equation}\label{Wfinal}
\Phi_N\left(\ket{\psi}\ket{0}^{\otimes N}\right)=\ket{\mathrm{aux}}\ket{xW_N}.
\end{equation}
\end{theorem}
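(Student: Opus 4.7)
The plan is to define the local isometry $\Phi_N=\bigotimes_{i=1}^N\Phi_i$, where each $\Phi_i$ is the standard SWAP-gate circuit of Fig.~\ref{fig:method} built from $Z_i,X_i$ for $i<N$, and for party $N$ from the unitarized operators $Z_N,X_N$ obtained from $D_N,E_N$ via Lemma~\ref{LemmaTilt} with $\alpha=0$ (so morally $Z_N\propto D_N+E_N$ and $X_N\propto D_N-E_N$). Applied to $\ket{\psi}\ket{0}^{\otimes N}$ this yields
\begin{equation}
\Phi_N\!\left(\ket{\psi}\ket{0}^{\otimes N}\right)=\sum_{\vec b\in\{0,1\}^N}\ket{v_{\vec b}}\ket{\vec b},\quad \ket{v_{\vec b}}:=\frac{1}{2^N}\prod_{i=1}^N X_i^{b_i}\bigl(\idd+(-1)^{b_i}Z_i\bigr)\ket{\psi}.
\end{equation}
In view of the decomposition (\ref{xW2}), proving (\ref{Wfinal}) amounts to showing that $\ket{v_{\vec b}}=0$ except when $\vec b=\vec 0$ or $\vec b$ has $1$s precisely at positions $i$ and $N$ for some $i<N$, and that all such surviving vectors coincide with a common $\ket{\mathrm{junk}}/\sqrt N$.

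First I would extract from conditions (\ref{ThmWcond2}) the operator identity $Z_i^{(-)}\ket{\psi}=\bigotimes_{l\neq i,\,l<N}Z_l^{(+)}\,Z_i^{(-)}\ket{\psi}$: the two sides have equal squared norm $1/N$, and Cauchy--Schwarz forces equality. Consequently $Z_j^{(-)}Z_k^{(-)}\ket{\psi}=0$ for any distinct $j,k\in\{1,\dots,N-1\}$, which kills $\ket{v_{\vec b}}$ whenever $\vec b$ has two or more $1$s among its first $N-1$ entries.

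Next, for each $i<N$ set $P_i:=\bigotimes_{l\neq i,\,l<N}Z_l^{(+)}$; the second identity in (\ref{ThmWcond1}) says that the conditional value of $B_{i,N}^{(+)}$ on the normalised projected state $P_i\ket{\psi}/\|P_i\ket{\psi}\|$ is exactly $2\sqrt 2$, so the standard sum-of-squares decomposition of the CHSH operator yields the two simultaneous relations
\begin{equation}
Z_i\,P_i\ket{\psi}=Z_N\,P_i\ket{\psi},\qquad X_i\,P_i\ket{\psi}=X_N\,P_i\ket{\psi},
\end{equation}
together with the induced identity $\{X_i,Z_i\}P_i\ket{\psi}=0$, which follows from the exact anticommutation of the raw CHSH operators $(D_N+E_N)/\sqrt 2$ and $(D_N-E_N)/\sqrt 2$ transported through the two equalities above. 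From here each remaining bit-string collapses by short manipulation: (i) $\ket{v_{\vec e_N}}$ is proportional to $(\idd-Z_N)Z_i^{(+)}P_i\ket{\psi}=(\idd-Z_i)Z_i^{(+)}P_i\ket{\psi}=0$; (ii) $\ket{v_{\vec e_i}}$ reduces to $Z_N^{(+)}X_NZ_N^{(+)}P_i\ket{\psi}$, which vanishes by $\{X_N,Z_N\}P_i\ket{\psi}=0$; (iii) the string with $1$s only at positions $i$ and $N$ produces $\tfrac12(\idd+Z_N)P_i\ket{\psi}=Z_N^{(+)}P_i\ket{\psi}$, and this equals $\bigotimes_{l=1}^{N-1}Z_l^{(+)}\otimes Z_N^{(+)}\ket{\psi}=\ket{v_{\vec 0}}$ because $Z_N^{(+)}P_i\ket{\psi}$ is automatically supported in the $+1$ eigenspace of $Z_i$.

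Isometry of $\Phi_N$ combined with the $N$ surviving (mutually equal) components then forces $\|\ket{v_{\vec 0}}\|^2=1/N$, so defining $\ket{\mathrm{junk}}:=\sqrt{N}\,\ket{v_{\vec 0}}$ recovers (\ref{Wfinal}). The main technical obstacle I foresee is the bookkeeping of the third step: one needs to verify that Lemma~\ref{LemmaTilt}'s unitarization of $Z_N,X_N$ agrees with the raw operators on the support of $P_i\ket{\psi}$ (which follows from maximal CHSH violation forcing the raw operators to square to $\idd$ on this support), and that the exact anticommutation on party $N$ transports cleanly to $\{X_i,Z_i\}P_i\ket{\psi}=0$ via the two SWAP relations. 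All remaining steps are routine operator juggling exploiting commutation across different parties.
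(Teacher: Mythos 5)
Your proposal is correct and follows essentially the same route as the paper's proof in Appendix~\ref{App:W}: the same SWAP isometry expansion over bit-strings, the same use of the decomposition \eqref{xW2}, maximal CHSH violation on the projected states $P_i\ket{\psi}$ to equate the surviving terms with $\tau_i=\tau_N=1$ to the common junk vector, and the conditions \eqref{ThmWcond2} (via Cauchy--Schwarz) to obtain $Z_j^{(-)}Z_k^{(-)}\ket{\psi}=0$ and kill all other terms. The only cosmetic difference is that you state the CHSH consequences as $Z_iP_i\ket{\psi}=Z_NP_i\ket{\psi}$ and $X_iP_i\ket{\psi}=X_NP_i\ket{\psi}$, whereas the paper uses the form \eqref{W11}--\eqref{W13}; these are equivalent and lead to the same term-by-term collapse.
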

\begin{proof}Denoting $Z_N=(D_N+E_N)/\sqrt{2}$ and $X_N=(D_N-E_N)/\sqrt{2}$, the action of the isometry can be explicitly written as
\begin{equation}\label{IsoW}
\Phi_N\left(\ket{\psi}\ket{0}^{\otimes N}\right)=\sum_{\tau \in\{0,1\}^N}X_1^{\tau_1}\ldots X_N^{\tau_N}Z_{1}^{(\tau_1)}\ldots Z_N^{(\tau_N)}\ket{\psi}\ket{\tau_1\ldots \tau_N},
\end{equation}
where $\tau=(\tau_1,\ldots,\tau_N)$ with each $\tau_i\in\{0,1\}$ and 
$Z_i^{(\tau_i)}=[\mathbbm{1}+(-1)^{\tau_i}Z_i]/2$.
%
%

It should be noticed that in general the operators $Z_N$ and $D_N$ might not be unitary, and one should consider $\widetilde{X}_N$ and $\widetilde{Z}_N$, which by constructions are unitary. However, as already explained in \ref{appendix_C}, their action on $\ket{\psi}$ is the same as the action of $X_N$ and $Z_N$, thus, for simplicity, we use these operators.

The first bunch of conditions (\ref{ThmWcond1}) implies that the norm of
\begin{equation}
\ket{\psi_i}=Z_1^{(+)}\ldots Z_{i-1}^{(+)}Z_{i+1}^{(+)}\ldots Z_{N-1}^{(+)}\ket{\psi}
\end{equation}
is $\sqrt{2/N}$, which together with the second set of conditions in  (\ref{ThmWcond1}) implies that the 
normalized states $\ket{\widetilde{\psi}_{i}}=\sqrt{N/2}\,\ket{\psi_i}$ violate maximally the CHSH Bell inequality between the parties $i$ and $N$ for $i=1,\ldots,N-1$. This, by virtue of what was said in Lemma \ref{LemmaTilt} , yields the following identities
\begin{eqnarray}
&&\label{W11}(Z_i-Z_N)\ket{\widetilde{\psi}_{i}}=0\\
&&\label{W12}[X_i(I+Z_N)-X_N(I-Z_i)]\ket{\widetilde{\psi}_{i}}=0\\
&&\label{W13}\{Z_i,X_i\}\ket{\widetilde{\psi}_{i}}=0.
\end{eqnarray}
They immediately imply that all terms in . (\ref{IsoW}) for which one element of $\tau$ equals one 
and the rest equal zero vanish. To see it explicitly, let $\tau_i=1$ and $\tau_{j}=0$ for $j\neq i$. Then, 
for this $\tau$, $\ket{\psi_{\tau}}=X_iZ_{i}^{(-)}Z_{N}^{(+)}\ket{\psi_i}$. Applying (\ref{W11}) to the latter and exploiting the fact that $Z_{i}^{(-)}Z_{i}^{(+)}=0$, one finally finds that $\ket{\psi_{\tau}}=0$.

Let us now consider those components of  (\ref{IsoW}) for which $\tau$ obeys $\tau_i=\tau_N=1$ with $i=1,\ldots,N-1$ and $\tau_{j}=0$ with $j\neq i,N$.
Then, the following chain of equalities holds 
\begin{eqnarray}
Z_1^{(+)}\ldots Z_{i-1}^{(+)}X_i Z_i^{(-)}Z_{i+1}^{(+)}\ldots Z_{N-1}^{(+)}X_NZ_{N}^{(-)}\ket{\psi}&=&X_iZ_{i}^{(-)} X_NZ_{N}^{(-)}\ket{\psi_i}\nonumber\\&=&X_iZ_{i}^{(-)} X_NZ_{i}^{(-)}\ket{\psi_i}\nonumber\\
&=&X_i Z_{i}^{(-)}X_iZ_{N}^{(+)}\ket{\psi_i}\nonumber\\
&=& Z_i^{(+)} Z_{N}^{(+)}\ket{\psi_i}\nonumber\\
&=&Z_1^{(+)}\ldots Z_{N}^{(+)}\ket{\psi},
\end{eqnarray}
where the second equality stems from (\ref{W11}), the third from  (\ref{W12}), and, finally, the fourth equality is a consequence of
the anticommutation relation (\ref{W13}) and the fact that $X_i^2=\mathbbm{1}$. 

With all this in mind it is possible to group the terms in 
(\ref{IsoW}) in the following way
\begin{equation}
\Phi_N(\ket{\psi}\ket{0}^{\otimes N})=\ket{\mathrm{aux}}\ket{xW_N}+\ket{\Omega},
\end{equation}
where $\ket{\mathrm{aux}}=\sqrt{N}\,Z_1^{(+)}\ldots Z_{N}^{(+)}\ket{\psi}$ and $\ket{\Omega}$ contains all those terms for which
$\tau$ contains more than two ones or exactly two ones
but $\tau_N=0$. 

Now, our aim is to prove that $\ket{\Omega}=0$. To this end we first notice that (\ref{ThmWcond2}) imply the following correlations
\begin{eqnarray}\label{Wcorrs}
\langle\psi|Z^{(-)}_{i}Z_{j}^{(+)}|\psi\rangle=\frac{1}{N},
\end{eqnarray}
where $i\neq j$ and $i,j=1,\ldots,N-1$.
This is a direct consequence of the fact that $Z_{i}^{(\pm)}\leq \mathbbm{1}$, which in turn implies that each of correlators in (\ref{Wcorrs}) 
is bounded from above 
by $\langle\psi|Z_i^{(-)}|\psi\rangle$ and from below by 
$\langle\psi| Z^{(+)}_{1}\ldots Z_{i-1}^{(+)} Z_i^{(-)} Z_{i+1}^{(+)}\ldots Z_{N-2}^{(+)}Z_{N-1}^{(+)}|\psi\rangle $ and both are assumed to equal $1/N$ [cf. (\ref{ThmWcond2})].

The first relation in  (\ref{Wcorrs}) together with 
(\ref{ThmWcond2}) and the fact that $Z_{j}^{(+)}+Z_{j}^{(-)}=\mathbbm{1}$ yields $\langle\psi| Z_{i}^{(-)}Z_{j}^{(-)}|\psi\rangle=0$. This, due to the fact that 
$Z_{i}^{(-)}Z_{j}^{(-)}$ is a projector, allows one to write 
\begin{equation}
Z_{i}^{(-)}Z_{j}^{(-)}\ket{\psi}=0
\end{equation}
for $i,j=1,\ldots,N-1$. This is enough to conclude that $\ket{\Omega}=0$, which when plugged into (\ref{xW2}), leads directly to (\ref{Wfinal}) because each component in $\ket{\Omega}$ has either three $\tau_i$ which equal 1, or two $\tau_i$ that equal one but then $\tau_N=0$. 

Since the self-test relies on the maximal violation of the CHSH Bell inequality
by a set of states $\ket{\widetilde{\psi}_i}$ $(i=1,\ldots,N-1)$, it also inherits self-testing of the optimal CHSH measurements, meaning that
\begin{eqnarray}
\Phi_N\left(Z_i\ket{\psi}\ket{0}^{\otimes N}\right)=\ket{\mathrm{aux}}\otimes \sigma_z^{(i)}\ket{xW_N}\nonumber\\
\Phi_N\left(X_i\ket{\psi}\ket{0}^{\otimes N}\right)=\ket{\mathrm{aux}}\otimes \sigma_x^{(i)}\ket{xW_N}
\end{eqnarray}
for $i=1,\ldots,N-1$, and
\begin{eqnarray}
\Phi_N\left(D_N\ket{\psi}\ket{0}^{\otimes N}\right)=\ket{\mathrm{aux}}\otimes \frac{\sigma_z^{(i)}+\sigma_x^{(i)}}{\sqrt{2}}\ket{xW_N},\nonumber\\
\Phi_N\left(E_N\ket{\psi}\ket{0}^{\otimes N}\right)=\ket{\mathrm{aux}}\otimes \frac{\sigma_z^{(i)}-\sigma_x^{(i)}}{\sqrt{2}}\ket{xW_N}.
\end{eqnarray}
This completes the proof.
\end{proof}

It should be noticed that our self-test of the $W$ state exploits two observables per site and, as in the case of the partially entangled GHZ state, the number of correlators one needs to determine is $2N$, and thus scales linearly with $N$.

\section{Complete self-test of all the symmetric Dicke state}
\label{App:Dicke}

Here we show that the above self-test of the $N$-partite W state can be used to construct a self-test of all the Dicke states. Let us recall that Dicke states can be written in the following way 
\begin{equation}
\ket{D_N^{m}}={\binom{N}{m}}^{-1/2}
\sum_{i}P_i(\ket{0}^{\otimes N-m}\ket{1}^{\otimes m}),
\end{equation}
where the sum is over all permutation of an $N$-element set and $m=0,\ldots,N$ (there is $N+1$ such states). Let us notice that a Dicke state with $m\leq \lfloor N/2\rfloor$ is unitarily equivalent to a Dicke state with $m\geq \lceil N/2\rceil$, i.e., $\ket{D_{N}^m}=\sigma_z^{\otimes N}\ket{D_{N}^{N-m}}$ for $m=0,\ldots,\lfloor N/2\rfloor$. For this reason below we consider the Dicke states with $m\geq \lfloor N/2\rfloor$. 

Interestingly, to self-test Dicke states we can exploit the previously demostrated self-test of the W state. This follows from the fact that any Dicke state can be written as
\begin{equation}
\ket{D_{N}^m}=
%
%
\frac{1}{\sqrt{N}}\left(\sqrt{N-m}\,\ket{0}\ket{D_{N-1}^m}+\sqrt{m}\,\ket{1}\ket{D_{N-1}^{m-1}}\right)
\end{equation} 
which, after recursive application, allows one to express $\ket{D_{N}^m}$ in terms of the Dicke states of smaller number of particles
\begin{equation}\label{DickeDec}
\ket{D_{N}^m}=\sum_{i_1,\ldots,i_{N-m-1}=0}^1\frac{{{m+1 \choose m-\Sigma}}^{\frac{1}{2}}}{{{N\choose m}}^\frac{1}{2}}\ket{i_1,\ldots,i_{N-m-1}}\ket{D_{m+1}^{m-\Sigma}},
\end{equation}
where $\Sigma=i_1+\ldots+i_{N-m-1}$. Due to the fact that $\ket{D_{N}^m}$ is symmetric, the above decomposition holds true for any choice of $N-m-1$ parties in the first ket in (\ref{DickeDec}).

Having settled some basic information about the symmetric Dicke states, we can now move on to demonstrating how they can be self-tested. To facilitate our considerations we show how to self-test the following unitarlity equivalent state
\begin{eqnarray}\label{dupa3}
\ket{xD_{N}^m}&=&\sigma_z^{(N)}\ket{xD_{N}^m}\nonumber\\
&=&\sum_{i_1,\ldots,i_{N-m-1}=0}^1\frac{{{m+1 \choose m-\Sigma}}^{\frac{1}{2}}}{{{N\choose m}}^\frac{1}{2}}\ket{i_1,\ldots,i_{N-m-1}}\ket{xD_{m+1}^{m-\sum}}.
\end{eqnarray}
We then notice that the state corresponding to $i_1=\ldots=i_{N-m-1}=0$ is exactly $\ket{xD_{m+1}^m}=\sigma^{\otimes (m+1)}_x\ket{xW_{m+1}}$ with $\ket{xW_{m+1}}$ defined in (\ref{Wstate})). Moreover, since $\ket{xD_N^m}$ is symmetric on the first $N-1$ parties, the state  $\sigma^{\otimes (m+1)}_x\ket{xW_{m+1}}$ will appear in any decomposition of the form (\ref{dupa3}) in which any choice of $N-m-1$ parties from the first $N-1$ ones are in state $\ket{0}$. Importantly, we already know how to self-test the W state $\sigma^{\otimes (m+1)}_x\ket{xW_{m+1}}$. However, due to the transformation $\sigma_{x}^{\otimes (m+1)}$ we have to modify the conditions specified in Theorem \ref{ThmW} in the following way: 
\begin{eqnarray}\label{transf}
&&Z_i^{(+)}\leftrightarrow Z_i^{(-)}\quad (i=1,\ldots,N-1),\nonumber\\
&& D_N\to -E_N\quad \mathrm{and}\quad E_N\to -D_N. 
\end{eqnarray}

Now, to self-test a Dicke state $\ket{D_N^m}$ for any $m\geq \lfloor N/2\rfloor$ we can proceed in the following way:
\begin{enumerate}
\item Project any $(N-m-1)$-element subset $\mathcal{S}_i$ of the first $N-1$ parties of $\ket{\psi}$
(there are $\binom{N-1}{N-1-m}$ such subsets)
onto $\bigotimes_{j\in \mathcal{S}_i} Z_{j}^{(+)}$
and check whether the state corresponding to the remaining parties satisfies
the conditions for $\ket{xD_{m+1}^m}=\sigma^{\otimes (m+1)}_x\ket{xW_{m+1}}$.

\item For every sequence $\tau=(\tau_1,\ldots,\tau_N)$
consisting of $m+1$ ones on the first $N-1$ positions,  check that the state $\ket{\psi}$ obeys the following correlations 
\begin{equation}\label{VinaBujanda}
\langle\psi|Z_1^{(\tau_1)}\ot\ldots\ot Z_N^{(\tau_N)}|
\psi\rangle=0,
\end{equation}
where $Z_i^{(\tau_i)} = \left[\mathbb{1} + (-1)^{\tau_i}Z_i\right]/2$
\end{enumerate}
Let us now see in more details how the above procedure allows one to self-test $\ket{D_{N}^m}$. It is not difficult to see that the first condition leads us to the following decomposition
\begin{equation}\label{DickeIso}
\Phi_N(\ket{\psi}\ket{0}^{\otimes N})=\left[\bigotimes_{l\in\mathcal{S}_i}Z_{l}^{(+)}
\ket{\phi_i}\right]\otimes \left[\ket{0}^{\otimes (N-m-1)}_{\mathcal{S}_i}\ket{xD_{m+1}^m}\right]+\ket{\Phi_i}
\end{equation}
for any $i=1,\ldots,\binom{N-1}{N-1-m}$, where all $\mathcal{S}_i$ stand 
for different $(N-m-1)$-element subsets of the
$(N-1)$-element set $\{1,\ldots,N-1\}$, and $\ket{\phi_i}$ is defined as
\begin{equation}
\ket{\phi_i}=\bigotimes_{l\in\{1,\ldots,N\}\setminus\mathcal{S}_i}X_{l}Z_{l}^{(-)}\ket{\psi}.
\end{equation}
In other words, to construct $\ket{\phi_i}$ from $\ket{\psi}$
one has to act on the latter with $X_lZ^{(-)}_l$ on all
parties who do not belong to $\mathcal{S}_i$. Finally, $\ket{\Phi_i}$ 
is some state from the global Hilbert space collecting the remaining terms.

Let us now show that all the states
\begin{equation}
\ket{\widetilde{\phi}_i}=\bigotimes_{l\in\mathcal{S}_i}Z_{l}^{(+)}
\ket{\phi_i}
\end{equation}
are the same. To this end, we will exploit the conditions 
(\ref{W11}) and (\ref{W13}), which are clearly preserved under the transformation 
(\ref{transf}), and also the fact that [cf. (\ref{XX})]:
\begin{equation}\label{dupa13}
(X_i-X_N)\ket{\psi}=\ket{\psi}
\end{equation}
for any $i=1,\ldots,N-1$.
Consider two vectors $\ket{\widetilde{\phi}_i}$ and $\ket{\widetilde{\phi}_j}$ such that the corresponding sets $\mathcal{S}_i$ and $\mathcal{S}_j$ share $N-m-2$ elements (remember that these sets are equinumerous). Let $q$ and $p$ be the two elements by which these sets differ, i.e., $p\in\mathcal{S}_i$ ($q\in\mathcal{S}_j$) and $p\notin\mathcal{S}_j$ ($q\notin\mathcal{S}_i$). Then, using the condition (\ref{W13}) we turn the operator $X_tZ_t^{(-)}$ into $Z_t^{(+)}X_t$ at positions $t=q$ and $t=N$ for 
the state $\ket{\widetilde{\phi}_i}$, and, analogously, at positions $t=p$ and $t=N$ for the state $\ket{\widetilde{\phi}_j}$. We utilize the fact that $X_iX_N\ket{\psi}=\ket{\psi}$ for all $i=1,\ldots,N-1$ stemming from (\ref{dupa13}), which shows that $\ket{\widetilde{\phi}_i}=\ket{\widetilde{\phi}_j}$. Finally, repeating this procedure for all pairs of states for which the corresponding sets $\mathcal{S}_i$ differ by two elements, one finds that $\ket{\widetilde{\phi_i}}\equiv\ket{\phi}$ for all $i$.

As a result, the state (\ref{DickeIso}) simplifies to 
\begin{equation}
\Phi_N(\ket{\psi}\ket{0}^{\otimes N})=\ket{\phi} \ket{xD_N^m}+\ket{\Phi},
\end{equation}
$\ket{\Phi}$ is a vector from the global Hilbert space defined as
\begin{equation}\label{Queulat}
\ket{\Phi}=\sum_{\tau}\left(X_1^{\tau_1}Z_1^{(\tau_1)}\ot\ldots\ot X_N^{\tau_N}Z_N^{(\tau_N)}\ket{\psi}\right)\otimes\ket{\tau},
\end{equation}
where the summation is over all sequences $\tau=(\tau_1,\ldots,\tau_N)$ that contain less than $N-m-1$ zeros (or, equivalently, more than $m$ ones) on the first $N-1$ positions. 

Now, to prove that $\ket{\Phi} = 0$ it suffices to exploit the second step in the above procedure. That is, 
the condition (\ref{VinaBujanda}) is equivalent to 
\begin{equation}\label{vector}
Z_1^{(\tau_1)}\ot\ldots\ot
Z_N^{(\tau_N)}\ket{\psi}=0
\end{equation} 
for every sequence $(\tau_1,\ldots,\tau_N)$ consisting of $m+1$ ones at the first $N-1$ positions. Then, every component of the vector in (\ref{Queulat}) contains a sequence of at least $m+1$ $Z^{(-)}$ operators, which by virtue of (\ref{vector}) implies that $\ket{\Phi}=0$. This completes the proof.

For the self-testing of measurements the same argumentation as in the case of $W$-state self-test applies:
\begin{eqnarray*}
\Phi(Z_i\ket{\psi}\ket{0\dots 0}) 
= \ket{\phi}\otimes\sigma_z^{(i)}\ket{xD_N^{m}} \qquad (i=1,\dots, N-1),\\
\Phi(X_i\ket{\psi}\ket{0\dots 0}) 
= \ket{\phi}\otimes\sigma_x^{(i)}\ket{xD_N^{m}} \qquad  (i=1,\dots, N-1),\\
\Phi(Z_N\ket{\psi}\ket{0\dots 0})  =\ket{\phi}\otimes\left(\frac{\sigma_z^{(N)}+\sigma_x^{(N)}}{\sqrt{2}}\right)\ket{xD_N^{m}},\\
\Phi(X_N\ket{\psi}\ket{0\dots 0}) =\ket{\phi}\otimes\left(\frac{\sigma_z^{(N)}-\sigma_x^{(N)}}{\sqrt{2}}\right)\ket{xD_N^{m}}.
\end{eqnarray*}

Analogously to the $N$-partite $W$ state the total amount of correlators necessary 
for self-testing of any Dicke state scales linearly with $N$. This is because 
one essentially needs to obtain the same correlators as for the $W$ state.

\section{Self-testing the graph states}
\label{App:Graph}

In this section we provide the detailed proof of the self-test of graph states, stated in Theorem \ref{theorem3}.

Before proving the theorem we need to make some introductory remarks. Let $\ket{\psi_G}$ be an $N$-qubit graph state that corresponds to a graph $G=\{V,E\}$, where $V=\{1,\ldots,N\}$ and $E$ stand for the sets of vertices and edges, respectively. Recall that any graph state can be written in the following form
\begin{equation}\label{eq:GraphState}
\ket{\psi_G} = \frac{1}{\sqrt{2^N}}\sum_{\bi{i} \in \{0,1\}^N}(-1)^{\mu(\bi{i})}\ket{\bi{i}},
\end{equation}
where the summation is over all sequences $\bi{i}=(i_1,\ldots,i_N)$
with $i_j=0,1$, and $\mu(\bi{i})$ is the number of edges connecting qubits being in the state $\ket{1}$ in ket $\ket{\bi{i}}$ (without counting the same edge twice).

Let then $\nu_i$ be the set of neighbours of the $i$th qubit, that is, all those qubits that are connected with $i$ by an edge, while by $|\nu_i|$ we denote the number of elements in $\nu_i$. Likewise, we denote by $\nu_{i,j}$ the set of neighbours of a pair of qubits $i$ and $j$, i.e., all those qubits that are connected to either $i$ or $j$ (notice that $\nu_{i,j}=\nu_{j,i}$), and $|\nu_{i,j}|$ the number of elements of $\nu_{i,j}$. We also assume that the parties are labelled in such a way that qubits $N-1$ and $N$ are connected and the party $N$ has the smallest number of neighbours, i.e., $\vert \nu_N \vert \leq \vert \nu_i\vert$ for all $i$. 


The main property of the graph states underlying our simple self-test is that by measuring all the neighbours of a pair of connected qubits $i,j$ in the $\sigma_z$ basis, the two qubits $i$ and $j$ are left in one of the Bell states [cf. Prop. 1 in Ref. \cite{HeinPRA}]:
\begin{equation}
\frac{1}{\sqrt{2}}(\sigma_z^{m_i}\otimes \sigma_{z}^{m_j})
(\ket{0+}+\ket{1-})
\end{equation}
where $m_i$ ($m_j$) is the number of parties from set $\nu_{i,j} \setminus j$ ($\nu_{i,j} \setminus i$) 
whose result of a measurement in $\sigma_z$ basis was $-1$, and where we have neglected an unimportant 
$-1$ factor that might appear. 

%

We are now ready to prove the main theorem. Let us denote 
$Z^{(\tau)}_{\nu_{i,j}}=\otimes_{l\in\nu_{i,j}}Z_l^{(\tau_l)}$,
where $\tau$ is an $|\nu_{i,j}|$-element sequence with each $\tau_l\in \{+,-\}$ (the operator $Z^{(\tau)}_{\nu_{i,j}}$ acts only on the parties belonging to $\nu_{i,j}$).

\begin{theorem}Let $\ket{\psi}$ and measurements $Z_i,X_i$ with $i=1,\ldots,N-1$
and $Z_N,D_N,E_N,Z_N \equiv \frac{D_N-E_N}{\sqrt{2}},_N \equiv \frac{D_N+E_N}{\sqrt{2}}$ satisfy the following conditions
\begin{equation}\label{GraphCond1}
\left\langle Z^{(\tau)}_{\nu_{N-1,N}}\right\rangle=\frac{1}{2^{|\nu_{N-1,N}|}},\qquad
\left\langle Z^{(\tau)}_{\nu_{N-1,N}}\otimes B_{N-1,N}^{(m_{N-1},m_N)}\right\rangle=\frac{2\sqrt{2}}{2^{|\nu_{N-1,N}|}}
\end{equation} 
for every choice of the $|\nu_{i,j}|$-element sequence $\tau$. The Bell operators $B_{N-1,N}^{(m_{N-1},m_{N})}$ are defined as
\begin{equation}
B_{N-1,N}^{(m_{N-1},m_{N})}=(-1)^{m_N} X_{N-1}\otimes (D_N+E_N)+(-1)^{m_{N-1}}Z_{N-1}\otimes (D_N-E_N),
\end{equation}
where $m_{N-1}$ and $m_{N}$ are the numbers of neighbours of 
the qubits, respectively, $N-1$ and $N$ (excluding the $N$th qubit and $N-1$th qubit, respectively) which are projected onto the eigenvector of $Z_i^-$.

We then assume that 
\begin{eqnarray}\label{GraphCond2}
\left\langle Z^{(\tau)}_{\nu_{i,j}}\right\rangle=\frac{1}{2^{|\nu_{i,j}|}},\qquad
\left\langle Z^{(\tau)}_{\nu_{i,j}}\otimes Z_i\otimes X_j\right\rangle=\frac{(-1)^{m_{j}}}{2^{|\nu_{i,j}|}}
\end{eqnarray}
for all connected pairs of indices $i\neq j$ except for $\neq (N,N-1)$ . As before, 
$Z^{(\tau)}_{\nu_{i,j}}=\otimes_{l\in\nu_{i,j}} Z_{l}^{(\tau_l)}$.
%
Then, for the isometry $\Phi_N$ one has
\begin{equation}\label{Wfinal}
\Phi_N\left(\ket{\psi}\ket{0}^{\otimes N}\right)=\ket{\mathrm{aux}}\ket{\psi_G}.
\end{equation}
\end{theorem}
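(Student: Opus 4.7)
The plan is to apply the standard local SWAP isometry
\[
\Phi_N(\ps\ket{0}^{\ot N}) = \sum_{\boldsymbol{\tau}\in\{0,1\}^N}\Bigl(\prod_{k=1}^{N}X_k^{\tau_k}Z_k^{(\tau_k)}\Bigr)\ps\otimes\ket{\boldsymbol{\tau}},
\]
exactly as in the proofs of Theorems~\ref{Thmw} and \ref{thm-PEGHZ}, set $\ket{\mathrm{junk}}:=2^{N/2}\bigl(\prod_{k}Z_k^{(+)}\bigr)\ps$, and, in view of the computational-basis expansion~(\ref{eq:GraphState}), reduce the claim to showing
\be
\Bigl(\prod_{k=1}^{N}X_k^{\tau_k}Z_k^{(\tau_k)}\Bigr)\ps = \frac{(-1)^{\mu(\boldsymbol{\tau})}}{\sqrt{2^N}}\,\ket{\mathrm{junk}}\qquad\forall\,\boldsymbol{\tau}\in\{0,1\}^N. \label{plan-target}
\ee

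The first step is to distill the algebraic identities that the two families of hypotheses impose on $\ps$. From (\ref{GraphCond1}), the equality $\|Z^{(\tau)}_{\nu_{N-1,N}}\ps\|^2 = 2^{-|\nu_{N-1,N}|}$ together with the saturation of the (sign-relabelled) CHSH bound implies that $\sqrt{2^{|\nu_{N-1,N}|}}\,Z^{(\tau)}_{\nu_{N-1,N}}\ps$ violates CHSH maximally, and Lemma~\ref{LemmaTilt} at $\alpha=0$ delivers on it the standard CHSH self-testing identities: in particular $Z_{N-1}Z^{(\tau)}_{\nu_{N-1,N}}\ps = (-1)^{m_{N-1}+m_N}Z_N Z^{(\tau)}_{\nu_{N-1,N}}\ps$, the anticommutation $\{Z_k,X_k\}Z^{(\tau)}_{\nu_{N-1,N}}\ps=0$ for $k\in\{N-1,N\}$, and $X_{N-1}(\mathds{1}-Z_{N-1})Z^{(\tau)}_{\nu_{N-1,N}}\ps = (-1)^{m_N}X_N(\mathds{1}+Z_{N-1})Z^{(\tau)}_{\nu_{N-1,N}}\ps$. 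For every other connected pair $(i,j)$, a direct Cauchy--Schwartz computation using (\ref{GraphCond2}) gives
\[
\bigl\|(Z_i-(-1)^{m_j}X_j)Z^{(\tau)}_{\nu_{i,j}}\ps\bigr\|^2 = 2\bigl(\langle Z^{(\tau)}_{\nu_{i,j}}\rangle - (-1)^{m_j}\langle Z^{(\tau)}_{\nu_{i,j}}\otimes Z_i\otimes X_j\rangle\bigr) = 0,
\]
so $X_j Z^{(\tau)}_{\nu_{i,j}}\ps = (-1)^{m_j}Z_i Z^{(\tau)}_{\nu_{i,j}}\ps$. This is the conditional form of the graph-state stabilizer relation $X_j\ket{\psi_G} = \prod_{u\in\nu_j}Z_u\ket{\psi_G}$: on the projected state, the product $\prod_{u\in\nu_j\setminus\{i\}}Z_u$ has collapsed to the scalar $(-1)^{m_j}$.

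The second step is to establish (\ref{plan-target}) by imitating the chain of equalities displayed in the proof of Theorem~\ref{Thmw}: using different-site commutation we rewrite the left-hand side as $\prod_{l:\tau_l=1}X_l \cdot \prod_l Z_l^{(\tau_l)}\ps$, and then at each site $k$ with $\tau_k=1$ we absorb $X_k$ into a sign by applying the step-one identity at a chosen neighbour $i$ of $k$. The basic accounting is that the identity replaces $X_k$ by $(-1)^{m_k}Z_i$ on the part of the projector product containing $Z^{(\tau)}_{\nu_{i,k}}$, and the adjacent factor $Z_i^{(\tau_i)}$ then evaluates $Z_i$ to the scalar $(-1)^{\tau_i}$, giving the combined sign $(-1)^{m_k+\tau_i}=(-1)^{|\{l\in\nu_k:\tau_l=1\}|}$, which is exactly the increment of $\mu(\cdot)$ when the bit at $k$ is switched on. Telescoping along the flip sequence produces $(-1)^{\mu(\boldsymbol{\tau})}$. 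For $k\in\{N-1,N\}$, where (\ref{GraphCond2}) is not available at the edge $(N-1,N)$, the CHSH identities from Step~1 (in particular the anticommutation at $\{N-1,N\}$ and the flip relation) play the analogous role, the assumption that $|\nu_N|$ is minimal guaranteeing that a non-excepted edge at $N-1$ is available whenever a sign has to be propagated further. The main obstacle is precisely this combinatorial bookkeeping: verifying that the chosen ordering of flips, the choice of reference neighbour $i$ at each step, and the joint handling of the projectors $Z_k^{(-)}$ present at the flipped sites compose to give the correct overall sign without residual operators. Once (\ref{plan-target}) is established, self-testing of the measurements follows at once from the identities used, as at the end of the proof of Theorem~\ref{Thmw}.
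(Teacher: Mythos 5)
Your proposal follows the same overall strategy as the paper: the same SWAP isometry, the same extraction of the CHSH self-testing identities at the pair $(N-1,N)$ from the first set of conditions, the same Cauchy--Schwartz derivation of the conditional stabilizer relation $X_j Z^{(\tau)}_{\nu_{i,j}}\ket{\psi}=(-1)^{m_j}Z_i Z^{(\tau)}_{\nu_{i,j}}\ket{\psi}$ from the second set, and the same term-by-term absorption of the $X_k$ operators into the sign $(-1)^{\mu(\boldsymbol{\tau})}$.

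There is, however, one genuine gap, and it sits exactly where you flag ``the joint handling of the projectors $Z_k^{(-)}$ present at the flipped sites'' as an obstacle. In the isometry term for a string $\boldsymbol{\tau}$ with $\tau_k=1$, the operator at site $k$ is $X_k Z_k^{(-)}$, and $X_k$ does not commute with $Z_k^{(-)}$; the flip relation you derived only tells you how $X_k$ acts on states of the form $Z^{(\tau)}_{\nu_{i,k}}\ket{\psi}$, i.e.\ with nothing at site $k$ itself in the way. To convert $X_k Z_k^{(-)}$ into $Z_k^{(+)}X_k$ (which is also what turns the residual projector at the flipped site into the $Z_k^{(+)}$ appearing in the junk state) you need the anticommutation $\{X_k,Z_k\}\ket{\psi}=0$ at \emph{every} vertex $k$, not only at $N-1$ and $N$ where CHSH supplies it. The paper obtains this by an explicit recursive propagation: picking a neighbour $N-2$ of $N-1$, it chains the two flip relations for the pair $(N-2,N-1)$ with the already-known anticommutativity of $X_{N-1},Z_{N-1}$ to deduce $\{X_{N-2},Z_{N-2}\}\ket{\psi}=0$, and then repeats this along the connected graph until all vertices are covered. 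This derivation is an essential intermediate lemma, not mere bookkeeping, and your plan does not contain it; without it the absorption step cannot be carried out at any site other than $N-1$ and $N$. Once you add this recursive anticommutation argument, the rest of your sign accounting matches the paper's and the proof closes.
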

\begin{proof}
The conditions in (\ref{GraphCond1}) imply that the 
normalized state 
\begin{equation}
\ket{\widetilde{\psi}^{(\tau)}_{N-1,N}}=\sqrt{2^{|\nu_{N-1,N}|}}\,Z_{\nu_{N-1,N}}^{(\tau)}\ket{\psi}
\end{equation}
violates maximally the CHSH Bell inequality, 
which in turn implies that 
\begin{equation}
\{X_{N-1},Z_{N-1}\}\ket{\widetilde{\psi}^{(\tau)}_{N-1,N}}=0
\quad\mathrm{and}\quad
\{X_{N},Z_{N}\}\ket{\widetilde{\psi}^{(\tau)}_{N-1,N}}=0,
\end{equation}
where $X_N=(D_N+E_N)/\sqrt{2}$ and $Z_N=(D_N-E_N)/\sqrt{2}$.
These identities hold true for any of $2^{|\nu_{i,j}|}$ projected states 
$\ket{\widetilde{\psi}^{(\tau)}_{N-1,N}}$, and therefore 
it must also hold for the initial state $\ket{\psi}$, i.e., 
\begin{equation}\label{DupaBlada}
\{X_{N-1},Z_{N-1}\}\ket{\psi}=0\quad
\mathrm{and}\quad \{X_{N},Z_{N}\}\ket{\psi}=0.
\end{equation}
This is because one can always decompose $\ket{\psi}$ in the eigenbasis of the operator $Z_{\nu_{N-1,N}}$ which is a tensor product of $Z$ operators acting on the neighbours of $i,j$.

Then, let us focus on the second bunch of conditions (\ref{GraphCond2}). They imply that the 
length of the projected vectors $\ket{\psi_{i,j}^{(\tau)}}=
Z_{\nu_{i,j}}^{(\tau)}\ket{\psi}$ is $1/\sqrt{2^{|\nu_{i,j}|}}$, so is 
the norm of $Z_i\ket{\psi_{i,j}^{(\tau)}}$ and $X_j\ket{\psi_{i,j}^{(\tau)}}$ for any connected pair $i\neq j$. This together with (\ref{GraphCond2}) mean that the vectors 
$Z_i\ket{\psi_{i,j}^{(\tau)}}$ and $X_j\ket{\psi_{i,j}^{(\tau)}}$
are parallel or antiparallel, or, more precisely, 
that
\begin{equation}\label{ThmGraphRownania}
(-1)^{m_j}Z_i\ket{\psi_{i,j}^{(\tau)}}=X_j\ket{\psi_{i,j}^{(\tau)}}
\end{equation}
for any connected pair of parties $i\neq j$.

%
%
%
%

Let us now consider one of the parties connected to 
the party $N-1$ (there must be at least one such party as otherwise 
the $N$th one would not be the one with the smallest number of neighbours or the graph would be bipartite).  We label this vertex by $N-2$. It then follows from conditions (\ref{ThmGraphRownania}) that for the particular pair of vertices $N-2,N-1$, one has the following identities  
\begin{equation}\label{ThmGraph1}
X_{N-2}\ket{\psi_{N-2,N-1}^{(\tau)}}=(-1)^{m_{N-2}}Z_{N-1}\ket{\psi_{N-2,N-1}^{(\tau)}}
\end{equation}
and
\begin{equation}\label{ThmGraph2} 
Z_{N-2}\ket{\psi_{N-2,N-1}^{(\tau)}}=(-1)^{m_{N-1}}X_{N-1}\ket{\psi_{N-2,N-1}^{(\tau)}}
\end{equation}  
hold true for all configurations of $\tau$. With their aid the 
following sequence of equalities
\begin{eqnarray}\label{ThmGraph3}
X_{N-2}Z_{N-2}\ket{\psi_{N-2,N-1}^{(\tau)}}&=&(-1)^{m_{N-1}}X_{N-2}X_{N-1}\ket{\psi_{N-2,N-1}^{(\tau)}}\nonumber\\
&=&(-1)^{m_{N-1}+m_{N-2}}X_{N-1}Z_{N-1}\ket{\psi_{N-2,N-1}^{(\tau)}}\nonumber\\
&=&-(-1)^{m_{N-1}+m_{N-2}}Z_{N-1}X_{N-1}\ket{\psi_{N-2,N-1}^{(\tau)}}\nonumber\\
&=&-(-1)^{m_{N-2}}Z_{N-1}Z_{N-2}\ket{\psi_{N-2,N-1}^{(\tau)}}\nonumber\\
&=&-Z_{N-2}X_{N-2}\ket{\psi_{N-2,N-1}^{(\tau)}}
\end{eqnarray}
holds true for any choice of $\tau$, where first and the second equality stems from (\ref{ThmGraph2}) and (\ref{ThmGraph1}), respectively, the third one is a consequence of the anticommutativity of $X_{N-1}$ and $Z_{N-1}$. Finally, the 
fourth and the fifth equality follows again from (\ref{ThmGraph2}) and (\ref{ThmGraph1}), respectively. 

Since the identity (\ref{ThmGraph3}) is obeyed 
for any configuration of $\tau$, it must also hold for the 
state $\ket{\psi}$, that is,  
%
$\{X_{N-2},Z_{N-2}\}\ket{\psi}=0.$
%
Taking into account the conditions (\ref{GraphCond1}), this procedure can be recursively applied to any pair of 
connected particles, yielding (together with (\ref{DupaBlada}))
\begin{equation}
\label{eq:acr}
\{X_{i},Z_{i}\}\ket{\psi}=0
\end{equation}
for any $i=1,\ldots,N$.
\end{proof}

The action of the isometry is given by 
\begin{equation}\label{GraphIsometry}
\Phi_N(\ket{\psi}\ket{0}^{\otimes N})=\sum_{i_1,\ldots,i_N=0}^{1}
X_{1}^{i_1}\ldots X_N^{i_N}\,
Z_{1}^{(i_1)}\ldots Z_N^{(i_N)}\,\ket{\psi}\ket{i_1\ldots i_N}
\end{equation}

Let us now consider a particular term in the above decomposition in which the sequence 
$i_1,\ldots,i_N$ has $k>0$ ones at positions $j_1,\ldots,j_k$, i.e., 
\begin{equation}\label{GraphTerm}
X_{j_1}^{i_{j_1}}\ldots X_{j_k}^{i_{j_k}}\,
\bigotimes_{l\notin I}Z_l^{(+)}\,
\bigotimes_{l\in I}Z_l^{(-)}\ket{\psi},
\end{equation}
where $I=\{j_1,\ldots,j_k\}$. By using the previously derived relations, we want 
to turn this expression into one that is proportional to the auxiliary state 
$Z_{1}^{(+)}\otimes\ldots \otimes Z_N^{(+)}\ket{\psi}$. To this end, let us first 
focus on the party $j_k$ and consider one of its neighbours
which we denote by $l$. For this pair of parties, the conditions 
(\ref{ThmGraphRownania}) imply that 
\begin{equation}
X_{j_k}\ket{\psi_{{j_k,l}}}=(-1)^{m_{j_k}}Z_{l}\ket{\psi_{{j_k,l}}},
\end{equation}
where, to recall, $m_{j_k}$ is the number of neighbours of $j_k$ being in the state $\ket{1}$ except for $l$. The above identity together with the anticommutativity relation 
$\{X_{j_k},Z_{j_k}\}\ket{\psi}=0$ allow us to replace in (\ref{GraphTerm}) the 
operator $X_{j_k}^{i_k}Z_{j_k}^{(-)}$ by $(-1)^{m_{j_k}}Z_{j_k}^{(+)}Z_{l}$.
Now, if the value of $i_l$ in the corresponding ket $\ket{i_1,\ldots,i_N}$
is zero, the last operator $Z_l$ can be simply absorbed by 
$Z_l^{(+)}$, while if $i_l=1$, one uses that fact that $Z_l^{(-)}Z_l=-Z_l^{(-)}$, meaning that one has an additional minus sign. Altogether this turns the operator 
$X_{j_k}^{i_k}Z_{j_k}^{(-)}$ into $(-1)^{m_{j_k}'}Z_{j_k}^{(+)}$, where 
$m_{j_k}'$ is the number of neighbours of $j_k$ (including $i_l$) 
which are in the state $\ket{1}$. Plugging this into (\ref{GraphTerm}) we can rewrite the latter as
\begin{equation}\label{GraphTerm2}
(-1)^{m_{j_k}'}X_{j_1}^{i_{j_1}}\ldots X_{j_{k-1}}^{i_{j_{k-1}}}\,
\bigotimes_{l\notin I'}Z_l^{(+)}\,
\bigotimes_{l\in I'}Z_l^{(-)}\ket{\psi},
\end{equation}
where now $I'=I\setminus\{i_l\}$, and so we have lowered the number of
elements of $I$ by one. It should be stressed that this affects the numbers of
neighbours of those parties that are still in $I'$, which will be of importance 
for what follows. 

Now, we can apply recursively the same reasoning to the remaining
elements of $I'$, keeping in mind that at each step one element
is removed from $I'$. We thus arrive at 
\begin{equation}
(-1)^{\mu'(\bi{i})}\,Z_1^{(+)}\,
\otimes \ldots \otimes Z_N^{(+)}\ket{\psi},
\end{equation}
with $\mu'$ defined as
\begin{equation}
\mu'(\bi{i})=\sum_{l=1}^k m_{j_l}^{>}, 
\end{equation}
where $m_{j_l}^{>}$ is the number of neighbours of $i_{j_l}$ being in the state 
$\ket{1}$ and having smaller indices than $j_l$, or, in other words, 
those elements of $I=\{j_1,\ldots,j_k\}$ smaller than $j_l$
that are neighbours of $i_{j_l}$. One immediately notices that 
$\mu'(\bi{i})$ equals $\mu(\bi{i})$ for a given $\bi{i}$, and therefore 
by applying the above reasoning to every term in 
(\ref{GraphIsometry}), one arrives at
\begin{equation}
\Phi_N(\ket{\psi}\ket{0}^{\otimes N})=
\left(Z_{1}^{(+)}\otimes\ldots \otimes Z_N^{(+)}\ket{\psi}\right)
\otimes\sum_{\bi{i}}(-1)^{\mu(\bi{i})}\ket{\bi{i}},
\end{equation}
which after normalizing both terms can be written as
\begin{equation}
\Phi_N(\ket{\psi}\ket{0}^{\otimes N})=
\ket{\mathrm{aux}}\ket{\psi_{G}}
\end{equation}
with $\ket{\mathrm{aux}}=(1/\sqrt{2^N})(Z_{1}^{(+)}\otimes\ldots \otimes Z_N^{(+)}\ket{\psi})$. \\

Once relation (\ref{eq:acr}) is satisfied for all $i$s the proof for measurement self-testing goes along the same lines as the proof for the self-testing of the state. Let us check how isometry $\Phi_n$ acts on the state $X_{\tilde{i}}\ket{\psi}$. (\ref{GraphTerm2}) takes the form:
\begin{eqnarray*}
\Phi_N(X_{\tilde{i}}\ket{\psi}\ket{0}^{\otimes N}) &=& \sum_{I,l}(-1)^{m_{j_k}'}X_{j_1}^{i_{j_1}}\ldots X_{j_{k-1}}^{i_{j_{k-1}}}\,
\bigotimes_{l\notin I'}Z_l^{(+)}\,
\bigotimes_{l\in I'}Z_l^{(-)}X_{\tilde{i}}\ket{\psi}\\
&=& \sum_{I \oplus \tilde{i},l}(-1)^{m_{j_k}'+1}X_{j_1}^{i_{j_1}}\ldots X_{j_{k-1}}^{i_{j_{k-1}}}\,
\bigotimes_{l\notin I'}Z_l^{(+)}\,
\bigotimes_{l\in I'}Z_l^{(-)}\ket{\psi}\\
&=& \left(Z_{1}^{(+)}\otimes\ldots \otimes Z_N^{(+)}\ket{\psi}\right)
\otimes\sum_{\bi{i}}(-1)^{\mu(\bi{i}\oplus \tilde{i})}\ket{\bi{i}}\\
&=& \ket{\mathrm{aux}}\otimes{\sigma_x}^{(\tilde{i})}\ket{\psi_{G}}, \quad \forall \tilde{i} \in \{1,2,\dots, N-1\}
\end{eqnarray*}
where $I \oplus \tilde{i}$ is equal to $I / \tilde{i}$ if $\tilde{i} \in I$ and to $I \cup \tilde{i}$ otherwise, and $\mu(\bi{i}\oplus \tilde{i})$ is the number of edges connecting qubits being in the state $\ket{1}$ in ket $\ket{\bi{i} \oplus (0\dots 0\tilde{i}0\dots 0)}$ (without counting the same edge twice). Similarly it can be shown that
\begin{eqnarray*}
\Phi_N(Z_{\tilde{i}}\ket{\psi}\ket{0}^{\otimes N}) &=&  \ket{\mathrm{aux}}\otimes {\sigma_z}^{(\tilde{i})}\ket{\psi_{G}},\quad \forall \tilde{i} \in \{1,2,\dots, N\},\\
\Phi_N(D_{N}\ket{\psi}\ket{0}^{\otimes N}) &=&  \ket{\mathrm{aux}}\otimes\left(\frac{\sigma_x^{(N)}+\sigma_z^{(N)}}{\sqrt{2}}\right)\ket{\psi_{G}},\\
\Phi_N(E_{N}\ket{\psi}\ket{0}^{\otimes N}) &=&  \ket{\mathrm{aux}}\otimes\left(\frac{-\sigma_x^{(N)}+\sigma_z^{(N)}}{\sqrt{2}}\right)\ket{\psi_{G}}.
\end{eqnarray*}

Finally, for self-testing graph-states one has to measure $3+\vert E \vert$ correlators, where $\vert E \vert$ is the total number of edges, which even for the fully connected graph is strictly better scaling than the complexity of quantum state tomography.


\section*{References}

\end{document}